\newcommand{\red}[1]{{#1}}
\newcommand{\bbR}{\mathbb R}
\newcommand{\bbone}{\mathbbm 1}
\newtheorem{theorem}{Theorem}[section]
\newtheorem{lem}{Lemma}[section]
\newtheorem{rem}{Remark}[section]
\newtheorem{proposition}[theorem]{Proposition}
\newtheorem{assumption}{Assumption}[section]
\newtheorem{remark}[theorem]{Remark}
\newcounter{hypA}
\newcommand{\cA}{\mathcal{A}}
\newcommand{\cC}{\mathcal{C}}
\newcommand{\bbE}{\mathbb{E}}
\newcommand{\cO}{\mathcal{O}}
\newcommand{\bbT}{\mathbb{T}}
\newcommand{\bbP}{\mathbb{P}}
\date{}
\begin{document}

\begin{center}

{\Large \textbf{Multilevel Particle Filters for L\'{e}vy-driven stochastic differential equations }}

\bigskip

BY AJAY JASRA $^{1}$, KODY J. H. LAW $^{2}$ \& PRINCE PEPRAH OSEI $^{1}$ 

$^{1}$ {\footnotesize Department of Statistics \& Applied Probability,
	National University of Singapore, Singapore, 117546, SG.}\\
{\footnotesize E-Mail:\,}\texttt{\emph{\footnotesize op.peprah@u.nus.edu, staja@nus.edu.sg}}\\
$^{2}$ {\footnotesize School of Mathematics, University of Manchester, UK, AND Computer Science and Mathematics Division, Oak Ridge National Laboratory Oak Ridge, TN, 37831, USA.}\\
{\footnotesize E-Mail:\,}\texttt{\emph{\footnotesize kodylaw@gmail.com}}
\end{center}

\begin{abstract}
We develop algorithms for computing expectations  
with respect to the laws of
 models associated to
 stochastic differential equations (SDEs) 
driven by pure L\'{e}vy processes.  We consider filtering such processes and well as
pricing of path dependent options.
We propose a multilevel particle filter (MLPF) to address the computational issues involved in solving these continuum problems.  
We show via numerical simulations and theoretical results that under suitable assumptions regarding the discretization 
of the underlying driving L\'{e}vy proccess, our proposed method achieves optimal convergence rates:
the cost to obtain MSE $\mathcal{O}(\epsilon^2)$ scales like $\mathcal{O}(\epsilon^{-2})$ 
for our method, as compared with the standard particle filter $\mathcal{O}(\epsilon^{-3})$. 
\\
\textbf{Keywords}: L\'{e}vy-driven SDE; L\'{e}vy processes; Particle Filters; Multilevel Particle Filters; Barrier options.
\end{abstract}

\section{Introduction}\label{levy:intro} 
L\'{e}vy processes have become very useful recently in several scientific disciplines.  A non-exhaustive list 
includes physics, in the study of turbulence and quantum field theory; economics, for continuous time-series models; insurance mathematics, for computation of insurance and risk, and 
mathematical finance, for pricing path dependent options.  Earlier application of L\'{e}vy processes in modeling financial instruments dates back in \cite{Madan_VGprocess} where a variance gamma process is used to model market returns.  

A typical computational problem in mathematical finance is the computation of the quantity 
$\mathbb{E}\left[f(Y_t)\right]$, where $Y_t$ is the time $t$ 
solution of a stochastic differential equation driven by a L\'{e}vy process and $f\in\mathcal{B}_b(\mathbb{R}^d)$, 
a bounded Borel measurable function on $\mathbb{R}^d$. 
For instance $f$ can be a payoff function. 
Typically one uses  the Black-Scholes model, 
in which the underlying price process is lognormal.  However, often 
the asset price exhibits big jumps over the time horizon.  
The inconsistency of the 
assumptions of the Black-Scholes model for market data has 
lead to the development of more realistic models for these data in the literature.  
General L\'{e}vy processes offer a promising 
alternative to describe the observed reality of financial market data, 
as compared to models that are based on standard Brownian motions.

In the application of standard and multilevel particle filter methods to SDEs driven by general L\'{e}vy processes, 
in addition to pricing path dependent options, we will consider filtering of partially-observed L\'{e}vy process with discrete-time observations.  
In the latter context, we will assume that the partially-observed data are regularly spaced observations 
$z_1,\dots,z_n$, where $z_k\in\mathbb{R}^d$ is a realization of $Z_k$ and $Z_k|(Y_{k\tau}=y_{k\tau}$) has density given by 
$g\left(z_k|y_{k\tau}\right)$, where $\tau$ is the time scale.  
Real 
S\&P $500$ stock price data will be used to illustrate our proposed methods 
as well as the standard particle filter.  
We will show how both of these problems can be formulated as general
Feynman-Kac type problems \cite{delm:04}, with time-dependent potential functions
modifying the L{\'e}vy path measure.

The multilevel Monte Carlo (MLMC) 
methodology was introduced in \cite{heinrich2001multilevel}
and first applied to the simulation of SDE driven by Brownian motion in \cite{Giles_mlmc}. 
Recently, \cite{Dereich_mlmcLevydriven} provided a detailed analysis of the application of MLMC to a L\'{e}vy-driven SDE. 
{This first work was extended in \cite{Dereich_gausscorrection} to a method with a Gaussian correction term which can 
substantially improve the rate for pure jump processes \cite{Asmusen_Rosinski_levyprocesses}.}
The authors in \cite{castilla_mlmcLevy} use the MLMC method for general L\'{e}vy processes based on Wiener-Hopf decomposition.  
We extend the methodology described in \cite{Dereich_mlmcLevydriven} to a particle filtering framework.  
This is challenging due to the following reasons.  
First, 
one must choose a suitable weighting function to prevent the weights in the particle filter being zero (or infinite).  
Next, one must 
control the jump part of the underlying L\'{e}vy process such that the path of the filter does not blow up as the time parameter increases.  In pricing path dependent options, for example knock out barrier options, we adopt the same strategy described in \cite{JayOption, Jay_smcdiff} for the computation of the expectation of the functionals of the SDE driven by general L\'{e}vy processes.

The rest of the paper is organised as follows.  In Section \ref{levy:PODLevy}, we briefly review the construction of general L\'{e}vy processes,  
the numerical approximation of L\'{e}vy-driven SDEs, 
the MLMC method, 
and finally the construction of a coupled kernel for L\'{e}vy-driven SDEs which will allow MLMC to be used.  
Section \ref{levy:ML_Levy} introduces both the standard and multilevel particle filter methods and their application to L\'{e}vy-driven SDEs.  
Section \ref{levy:numerics} features numerical examples of pricing barrier options and filtering of partially observed L\'{e}vy processes. 
The computational savings of the multilevel particle filter over the standard particle filter is illustrated in this section.  

\section{
Approximating SDE driven by 
L\'{e}vy Processes}\label{levy:PODLevy}

In this section, we briefly describe the construction and approximation 
of a general $d^{\prime}$-dimensional L\'{e}vy process $\{X_t\}_{t\in[0,K]}$,
and the solution $Y:=\{Y_t\}_{t\in[0,K]}$ of a $d$-dimensional SDE driven by $X$.
Consider a stochastic differential equation given by
\begin{align}\label{levy:eq1}
\mathrm{d}Y_t&=a(Y_{t^{-}})\mathrm{d}X_t,\quad \mathrm{y}_0\in\mathbb{R}^d,
\end{align}
where 
$a:\mathbb{R}^d\rightarrow\mathbb{R}^{d\times d^{\prime}}$, 
and the initial value is $\mathrm{y}_0$ (assumed known).  
In particular, in the present work we are interested in computing the expectation of bounded and measurable functions 
$f:\mathbb{R}^d\rightarrow\mathbb{R}$, 
that is $\mathbb{E}[f(Y_t)]$. 


\subsection{L\'{e}vy Processes}\label{levy:levproc}

For a general detailed description of the L\'{e}vy processes and analysis of SDEs driven by L\'{e}vy processes, 
we shall refer the reader to the monographs of \cite{Bertoin_levyprocesses,Sato_levyprocesses} and \cite{Applebaum_levyprocesses,Protter_Isde}.  
L\'{e}vy processes are stochastic processes with stationary and independent increments, which begin almost surely from  the origin and are stochastically continuous.  Two important fundamental tools available to study the richness of the class of  L\'{e}vy processes are the L\'{e}vy-Khintchine formula and the L\'{e}vy-It\^{o} decomposition.  They respectively characterize the distributional properties and the structure of sample paths of the L\'{e}vy process.  Important examples of L\'{e}vy processes include Poisson processes, compound Poisson processes and Brownian motions.

There is a strong interplay between L\'{e}vy processes and infinitely divisible distributions such that, for any $t>0$ the distribution of $X_t$ is infinitely divisible.  Conversely, if $F$ is an infinitely divisible distribution then there exists a L\'{e}vy process $X_t$ such that the distribution of $X_1$ is given by $F$.  This conclusion is the result of L\'{e}vy-Khintchine formula for L\'{e}vy processes we describe below.  Let $X$ be a L\'{e}vy process with a triplet 
$\left(\nu,\Sigma,b\right)$, $b\in\mathbb{R}^{d'}, 0\leq\Sigma=\Sigma^T \in\mathbb{R}^{d'\times d'}$, 
where $\nu$ is a measure 
satisfying $\nu(\{0\})=0$ and $\int_{\mathbb{R}^{d'}}(1\wedge|x|^2)\nu(\mathrm{d}x)<\infty$, 
such that
\begin{align*}
	\mathbb{E}[e^{i\langle u, X_t\rangle }]&=\int_{\mathbb{R}^{d'}}e^{i\langle u, x\rangle }\pi(\mathrm{d}x)=e^{t\psi(u)}
\end{align*}
with $\pi$ the probability law of $X_t$, where
\begin{align}\label{levy:eq2}
\psi(u)&=i\langle u, b\rangle-\frac{\langle u, \Sigma u \rangle}{2}+
\int_{\mathbb{R}^{d'}\backslash\{0\}}\left(e^{i\langle u , x\rangle }-1-i\langle u , x\rangle 
\right)\nu(dx),\quad u\in\mathbb{R}^{d'}.
\end{align}
The measure $\nu$ is called the L\'{e}vy measure of $X$.  
The triplet of L\'{e}vy characteristics $\left(\nu,\Sigma,b\right)$ is simply called L\'{e}vy triplet.  
Note that in general, the L\'{e}vy measure $\nu$ can be finite or infinite.  
If $\nu(\bbR)<\infty$, then almost all paths of the L\'{e}vy process have a finite number of jumps on every compact interval {and it can be represented as a compensated compound Poisson process.}  
On the other hand, if $\nu(\bbR)=\infty$, 
then the process has an infinite number of jumps on every compact interval almost surely.
{Even in this case the third term in the integrand ensures that the integral is finite,
and hence so is the characteristic exponent.}

\subsection{Simulation of L\'{e}vy Processes}\label{levy:levsimulation}

The law of increments of many L\'{e}vy processes is not known explicitly.  
This makes it more difficult to simulate a path of a general L\'{e}vy process 
than for instance standard Brownian motion.  
For a few L\'{e}vy processes where the distribution of the process is known explicitly, \cite{Cont_jumpprocesses,Schoutens_levyprocfinance} provided 
methods for exact simulation of 
such processes, which are applicable in financial modelling.  
For our purposes, the simulation of the path of a general L\'{e}vy process will be based on the L\'{e}vy-It\^{o} decomposition and we briefly describe the construction below.  
An alternative construction is based on Wiener-Hopf decomposition.
This is used in \cite{castilla_mlmcLevy}. 

The L\'{e}vy-It\^{o} decomposition reveals much about the structure of the paths of a L\'{e}vy process.  We can split the L\'{e}vy exponent, or the characteristic exponent of $X_t$ in $\left(\ref{levy:eq2}\right)$, into three 
parts
\begin{align*}
\psi&=\psi^{1}+\psi^2+\psi^3 \, . 
\end{align*}
where 
\begin{align*}
\psi^1(u)&=i\langle u , b\rangle ,\quad \psi^2(u)=-\frac{\langle u, \Sigma u \rangle}{2}, \\ 
\psi^3(u)&=\int_{\mathbb{R}^{d'}\backslash\{0\}}\left(e^{i\langle u , x\rangle }-1-i\langle u , x\rangle 
\right)\nu(dx),\quad u\in\mathbb{R}^{d'}
\end{align*}
The first term corresponds to a deterministic drift process with parameter $b$, 
the second term to a Wiener process with covariance 
${\sqrt{\Sigma}}$, {where $\sqrt{\Sigma}$ denotes the symmetric square-root}, and 
the last part corresponds to 
a L\'{e}vy process which is a square integrable martingale. 
This term may either be a compensated compound Poisson process or the limit of such processes, 
and it is the hardest to handle when it arises from such a limit.  

Thus, any L\'{e}vy process can be decomposed into three independent L\'{e}vy processes thanks to the L\'{e}vy-It\^{o} decomposition theorem.
In particular, let $\{W_t\}_{t\in[0,K]}$ denote a Wiener process independent of the 
process $\{L_t\}_{t\in[0,K]}$.  
A L\'{e}vy process $\{X_t\}_{t\in[0,K]}$ can be {\emph constructed} as follows
\begin{align}\label{levy:eq3}
X_t&=\sqrt{\Sigma} W_t+L_t+bt \, .
\end{align}
The L\'{e}vy-It\^{o} decomposition guarantees that every square integrable L\'{e}vy process has a representation as $\left(\ref{levy:eq3}\right)$.  
We will assume that one cannot sample from the law of $X_t$, hence of $Y_t$, 
and rather we must numerically approximate the process with finite resolution.
Such numerical methods have been studied extensively, for example in \cite{Jacod_levydrivensde,Rubenthaler_levyprocess}.

It will be assumed that the L{\'e}vy process $X$ \eqref{levy:eq2}, 
and the L{\'e}vy-driven process $Y$ in \eqref{levy:eq1}, 
satisfy the following conditions.  Let $|\cdot|$ denote the standard Euclidean $l2$ norm, 
for vectors, and induced operator norm for matrices.

\begin{assumption} \label{ass:main}
There exists a $C>0$ such that 
\begin{itemize}
\item[{\rm (i)}] $|a(y) - a(y')| \leq C |y-y'|$, and $|a(y)| \leq C$ for all $y\in \bbR^d$ ;
\item[{\rm (ii)}] $0 < \int |x|^2 \nu(dx) \leq C^2$ ; 
\item[{\rm (iii)}] $|\Sigma|< C^2$ and $|b| \leq C$ \, .
\end{itemize}
\end{assumption}

Item (i) provides continuity of the forward map, while (ii) controls the variance of the jumps, and (iii) controls the diffusion and drift components and is trivially satisfied.  
These assumptions are the same as in the paper \cite{Dereich_mlmcLevydriven}, with the exception of the second part of (i), which was not required there.
As in that paper we refer to the following general references on L{\'e}vy processes for further details\cite{Applebaum_levyprocesses, Bertoin_levyprocesses}.

\subsection{Numerical Approximation of a L\'{e}vy Process and L\'{e}vy-driven SDE}
\label{numApprox}

Recall $\left(\ref{levy:eq1}\right)$ and $\left(\ref{levy:eq3}\right)$. 
Consider the evolution of discretized L\'{e}vy process and hence the L\'{e}vy-driven SDE over the time interval $[0,K]$.

In order to describe the Euler discretization of the two processes for a given accuracy parameter $h_l$, 
we need some definitions.  The meaning of the subscript will become clear in the next section. 
Let $\delta_l>0$ denote a 
jump threshold parameter 
in the sense that jumps which are smaller than $\delta_l$ will be ignored.
Let $B_{\delta_l}=\{x\in\mathbb{R}^{d'}:|x|<\delta_l\}$.  
Define $\lambda_l=\nu(B_{\delta_l}^{c})<\infty$, that is the L\'{e}vy 
measure outside of the ball of radius $\delta_l$.  
We assume that the L\'evy component of the process is 
nontrivial so that $\nu(B_1)=\infty$.
First $h_l$ will be chosen and then 
the parameter $\delta_l$ will be chosen such that the step-size of the time-stepping method is 
$h_l=1/\lambda_l$.  
The jump time increments are exponentially distributed with parameter $\lambda_l$ so that 
the number of jumps before time $t$ is a Poisson process $N^l(t)$
with intensity $\lambda_l$.
The jump times will be denoted by $\tilde{T}_j^l$.
The jump heights $\Delta L_{\tilde{T}_j}^l$ 
are distributed according to
\begin{align*}
\mu^l(\mathrm{d}x)&:=\frac{1}{\lambda_l}\bbone_{B_{\delta_l}^{c}}(x)\nu(\mathrm{d}x).
\end{align*} 
Define 
\begin{equation}\label{eq:efl}
F_0^l=\int_{B_{\delta_{l}}^{c}}x\nu(\mathrm{d}x).
\end{equation}
The expected number of jumps on an interval of length $t$ is $F_0^l t$, 
and \red{the compensated compound Poisson process $L^{\delta}$ defined by
$$L_t^{\delta}=\sum_{j=1}^{N^l(t)} \Delta L_{\tilde{T}_j}^l - F_0^l t$$ 
is an $L^2$ martingale which 
converges in $L^2$ to the L{\'e}vy process $L$ 
as $\delta_l \rightarrow 0$ \cite{Applebaum_levyprocesses,Dereich_mlmcLevydriven}}.

The Euler discretization of the L\'{e}vy process and the L\'{e}vy driven SDE
is given by 
Algorithm \ref{levy:DiscreteAlgo}.  
Appropriate refinement 
of the original jump times $\{\tilde{T}^l_j\}$ to new jump times $\{T_{j}^{l}\}$ 
is necessary to control the discretization error arising from 
the Brownian motion component, the original drift process, and the drift 
component of the compound Poisson process.  
Note that  
the $\Delta L_{{T}^l_j}^{l}$ is non-zero only when $T_{j}^{l}$ corresponds with $\tilde{T}_{m}^{l}$ for some $m$,
as a consequence of the construction presented above.

\begin{algorithm}[!ht]
	\caption{\textbf{: Discretization of L\'{e}vy process}} 
	\label{levy:DiscreteAlgo}
	\begin{algorithmic}
		\STATE
		Initialization: Let $\tilde{T}_0^l=0$ and $j=1$;
		\begin{enumerate}
			\item[(A)]  Generate jump times:
%
%
				$\tilde{T}_j^l=\min\{1,\tilde{T}_{j-1}^l+\xi_j^l\}$, $\xi_j^l\sim Exp(\lambda_l)$\, ;
				
				If $\tilde{T}_j^l=1, \tilde{k}_l=j$; Go to (B);
				
				Otherwise $j=j+1$;
				Go to start of (A).
			
			\item[(B)]  Generate jump heights:
			
			For $j\in\{1,\dots,\tilde{k}_l-1\}$, $z_j^l\sim\mu^l$;
			
			$\Delta L_{\tilde{T}_j^l}^l=z_j^l$ and $\Delta L_{\tilde{T}_{\tilde{k}_l}}^l=0$;

						Set $j=1$, $T_0^l=0$.

			\item[(C)]  Refinement of original jump times:
			
				$T_j^l=\min\bigg\{T_{j-1}^l+h_l, \min\Big\{\tilde{T}_k^l > T_{j-1}^l;k\in\{1,\dots,\tilde{k}_l\}\Big\}\bigg\}$ \, ;
				
				If $T_j^l=\tilde{T}_k^l$ for some $k\in \{1,\dots, \tilde{k}_l\}$, 
				then $\Delta L_{T^l_j} = \Delta L_{\tilde{T}^l_j}$; otherwise $\Delta L_{T^l_j} =0$ \, ;
				
				If $T_j^l=1, k_l=j$; Go to (D);
				
				
				
				Otherwise $j=j+1$; Go to start of (C). 
		\end{enumerate}
	\end{algorithmic}
\end{algorithm}

\begin{algorithm}[!ht]
	\begin{algorithmic}
		\STATE
		\begin{enumerate}
			\item[(D)]  Recursion of the process:
			
			For $m\in\{0,\dots,k_l-1\}$, $X_0^l=x_0$; 
			\begin{equation}\label{levy:eq4}
			X^l_{T^l_{m+1}} =X^l_{T^l_{m}}+\sqrt{\Sigma}\Big(W_{T^l_{m+1}}-W_{T^l_{m}}\Big)+
			\Delta L_{{T}_{m+1}^l}^l 
			+(b-F_0^l)(T^l_{m+1}-T^l_{m}) \, .
			\end{equation}
		\end{enumerate}
	\end{algorithmic}
\end{algorithm}

The numerical approximation of the L\'{e}vy process described in Algorithm \ref{levy:DiscreteAlgo}
gives rise to an approximation of 
the L\'{e}vy-driven SDE as follows.  
Given $Y^l_{T^l_{0}}$, for $m=0,\dots, k_l-1$
\begin{equation}\label{eq:euler_levyd}
			Y^l_{T^l_{m+1}} =Y^l_{T^l_{m}}+a(Y^l_{T^l_{m}})(\Delta X)^l_{T^l_{m+1}},
\end{equation}
where $(\Delta X)^l_{T^l_{m+1}} = X^l_{T^l_{m+1}}-X^l_{T^l_{m}}$ is given by \eqref{levy:eq4}.
In particular the recursion in \eqref{eq:euler_levyd} gives rise to a 
transition kernel, denoted by 
$Q^l(u,dy)$, 
between observation times $t\in\{0,1,\dots,K\}$.  
This kernel is the measure of 
$Y^l_{T^l_{k_l}}$
given initial condition 
$Y^l_{T^l_0}=u$.  
Observe that the initial condition for $X$ is irrelevant for simulation of $Y$, 
since only the increments $(\Delta X)^l_{T^l_{m+1}}$ are required, 
which are simulated independently by adding a realization of 
$N\big ((b-F_0^l)(T^l_{m+1}-T^l_{m}), (T^l_{m+1}-T^l_{m})\Sigma \big)$
to $\Delta L_{T_{m+1}^l}^l$.


\begin{rem}
	The numerical approximation of the L\'{e}vy process and hence L\'{e}vy-driven SDE $\left(\ref{levy:eq1}\right)$ in Algorithm \ref{levy:DiscreteAlgo} is the single-level version of a more general coupled discretization \cite{Dereich_mlmcLevydriven} which will be described shortly in Section \ref{levy:couplekernel}.  This procedure will be used to obtain samples for the plain particle filter algorithm.  
\end{rem}

\subsection{Multilevel Monte Carlo Method} 
\label{levy:mlmc}

Suppose one aims to approximate the expectation of functionals of the solution of the
L\'{e}vy-driven SDE in $\left(\ref{levy:eq1}\right)$ at time $1$, that is $\mathbb{E}[f(Y_1)]$, 
where $f:\mathbb{R}^d\rightarrow\mathbb{R}$ is a bounded and measurable function.  
Typically, one is interested in the expectation w.r.t. the law of exact solution of SDE $\left(\ref{levy:eq1}\right)$, 
but this is not always possible in practice. 
Suppose that the law associated with $\left(\ref{levy:eq1}\right)$ with no discretization is $\pi_1$.
Since we cannot sample from $\pi_1$, we use a biased version $\pi^L_1$ associated with a given level of discretization of SDE $\left(\ref{levy:eq1}\right)$ at time $1$.   
Given $L\geq1$, define $\pi_{1}^{L}(f):=\mathbb{E}[f(Y^L_{1})]$, the expectation with respect to the density associated with the Euler discretization $\left(\ref{levy:eq4}\right)$ at level $L$.  
The standard Monte Carlo (MC) approximation at time $1$ consists in obtaining i.i.d. samples $\Big(Y_{1}^{L,(i)}\Big)_{i=1}^{N_L}$ from the density $\pi_1^L$ and approximating $\pi_{1}^{L}(f)$ by its empirical average
\begin{align*}
{\pi}_1^{L,N_L}(f)&:=\frac{1}{N_L}\sum_{i=1}^{N_L}f(Y_1^{L,(i)}).
\end{align*}
The mean square error of 
the estimator 
is
\begin{align*}
e({\pi}_1^{L,N_L}(f))^2&:=\mathbb{E}\left[\left({\pi}_1^{L,N}(f)-\pi_{1}(f)\right)^2\right].
\end{align*}
Since the MC estimator ${\pi}_1^{L,N_L}(f)$ is an unbiased estimator for $\pi_{1}^{L}(f)$, 
 this can further be decomposed into
\begin{equation}\label{eq:error_dec}
e({\pi}_1^{L,N_L}(f))^2=\underbrace{N_L^{-1}\mathbb{V}[f(Y_1^L)]}_{\hbox{variance}}
+(\underbrace{{\pi}_1^L(f)-\pi_{1}(f)}_{\hbox{bias}})^2.
\end{equation}
The first term in the right hand side of the decomposition is the variance of MC simulation and the second term is the bias 
arising from discretization. 
If we want \eqref{eq:error_dec} to be $\mathcal{O}(\epsilon^2)$, 
then it is clearly necessary to choose $N_L\propto \epsilon^{-2}$, 
and then the total cost is $N_L\times {\rm Cost}(Y_1^{L,(i)}) \propto \epsilon^{-2 - \gamma}$, 
where  it is assumed that ${\rm Cost}(Y_1^{L,(i)}) \propto \epsilon^{-\gamma}$ for some $\gamma>0$ 
is the cost to ensure the bias is $\mathcal{O}(\epsilon)$.

Now, in the multilevel Monte Carlo (MLMC) settings, one can observe that the expectation of the finest approximation ${\pi}_1^L(f)$ can be written as a telescopic sum starting from a coarser approximation ${\pi}_{1}^{0}(f)$, and the intermediate ones:
\begin{align}\label{levy:eq5}
{\pi}_{1}^L(f)&:={\pi}_{1}^{0}(f)+\sum_{l=1}^{L}\left({\pi}_{1}^{l}(f)-{\pi}_{1}^{l-1}(f)\right).
\end{align}
Now it is our hope that the variance of the increments decays with $l$, 
which is reasonable in the present scenario where they are finite resolution approximations of a limiting process.
The idea of the MLMC method is to approximate the multilevel (ML) identity $\left(\ref{levy:eq5}\right)$ 
by independently computing each of the expectations in the telescopic sum by a standard MC method.  
This is possible by obtaining i.i.d. 
pairs of samples $\Big(Y_{1}^{l,(i)},Y_{1}^{l-1,(i)}\Big)_{i=1}^{N_l}$ for each $l$,
from a suitably coupled joint measure $\bar{\pi}_1^l$ with the appropriate marginals $\pi_1^l$ and $\pi_1^{l-1}$, 
for example generated from a coupled 
simulation the Euler discretization of SDE $\left(\ref{levy:eq1}\right)$ at successive refinements.  
The construction of such a coupled kernel 
is detailed in Section \ref{levy:couplekernel}.  
Suppose it is possible to obtain such coupled samples at time $1$.
Then 
for $l=0,\dots,L$, one has independent MC estimates.  Let
\begin{align}\label{levy:eq6}
{\pi}^{{N}_{0:L}}_{1}(f)&:=\frac{1}{N_0}\sum_{i=1}^{N_0}f(Y_1^{1,(i)})+\sum_{l=1}^{L}\frac{1}{N_l}\sum_{i=1}^{N_l}\left(f(Y_1^{l,(i)})-f(Y_1^{l-1,(i)})\right),
\end{align}
where ${N}_{0:L}:=\left\lbrace N_l\right\rbrace_{l=0}^{L}$.  Analogously to the single level Monte Carlo method, the mean square error for the multilevel estimator $\left(\ref{levy:eq6}\right)$ can be expanded to obtain
\begin{align}\label{levy:eq7}
e\left({\pi}_{1}^{{N}_{0:L}}(f)\right)^2&:=\underbrace{\sum_{l=0}^{L}N_l^{-1}\mathbb{V}[f(Y_1^{l})-f(Y_1^{l-1}))]}_{\hbox{variance}}+(\underbrace{{\pi}_1^L(f)-\pi_{1}(f)}_{\hbox{bias}})^2,
\end{align}
with the convention that $f(Y_1^{-1})\equiv 0$.  
It is observed that the bias term remains the same; that is we have not introduced any additional bias.  However, by an optimal choice of $N_{0:L}$, one can possibly reduce the computational cost for any pre-selected tolerance of the variance of the estimator, or conversely reduce the variance of the estimator for a given computational effort.

In particular, for a given user specified error tolerance $\epsilon$ measured in the root mean square error, the highest level $L$ and the replication numbers $N_{0:L}$ are derived as follows.  We make the following assumptions about the bias, variance and computational cost based on the observation that there is an exponential decay of bias and variance as $L$ increases.

Suppose that there exist some constants $\alpha,\beta,\gamma$ and an accuracy parameter $h_l$ associated with the discretization of SDE $\left(\ref{levy:eq1}\right)$ at level $l$ such that
\begin{itemize}
	\item[$\left(B_l\right)$] $|\mathbb{E}[f(Y^{l})-f(Y^{l-1})]|=\mathcal{O}(h_{l}^{\alpha})$,
	\item[$\left(V_l\right)$] $\mathbb{E}[|f(Y^{l})-f(Y^{l-1})|^2]=\mathcal{O}(h_{l}^{\beta})$,
	\item[$\left(C_l\right)$] $\hbox{cost}\left(Y^{l},Y^{l-1}\right) \propto h_{l}^{-\gamma}$,
\end{itemize}
where $\alpha,\beta,\gamma$ are related to the particular choice of the discretization method 
and cost is the computational effort to obtain one sample $\left(Y^l,Y^{l-1}\right)$.  
For example, the Euler-Maruyama discretization method for the solution of SDEs driven by Brownian motion 
gives orders $\alpha =\beta =\gamma=1$.  
The accuracy parameter $h_l$ typically takes the form $h_l=S_{0}^{-l}$ for some integer $S_{0}\in\mathbb{N}$.
Such estimates can be obtained for L\'evy driven SDE and this point will be revisited in detail below.
For the time being we take this as an assumption.

The key observation from the mean-square error of the multilevel estimator $\left(\ref{levy:eq6}\right)-\left(\ref{levy:eq7}\right)$ is that the bias is given by  the finest level, while the variance is decomposed into a sum of variances of the  $l^{th}$ increments.  
Thus the total variance is of the form $\mathcal{V}=\sum_{l=0}^{L}V_lN_l^{-1}$ and by condition  $\left(V_l\right)$ above, the variance of the $l^{th}$ increment is of the form $V_lN_l^{-1}$.  
The total computational cost takes the form $\mathcal{C}=\sum_{l=0}^{L}C_lN_l$.
In order to minimize the effort to obtain a given mean square error (MSE), one must balance the terms in $\left(\ref{levy:eq7}\right)$.  Based on the condition $\left(B_l\right)$ above, a bias error proportional to $\epsilon$ will require the highest level

\begin{align}\label{levy:eq8}
L&\propto\frac{-\log(\epsilon)}{\log(S_{0})\alpha}.
\end{align}

In order to obtain optimal allocation of resources $N_{0:L}$, one needs to solve a constrained optimization problem: 
minimize the total cost $\mathcal{C}=\sum_{l=0}^{L}C_lN_l$ for a given fixed total variance $\mathcal{V}=\sum_{l=0}^{L}V_lN_l^{-1}$ or vice versa.  
Based on the conditions $\left(V_l\right)$ and $\left(C_l\right)$ above, one obtains via the Lagrange multiplier method
the optimal allocation $N_l\propto V_l^{1/2}C_l^{-1/2}
\propto h_{l}^{(\beta+\gamma)/2}$.

Now targetting an error of size $\mathcal{O}(\epsilon)$, one sets $N_l\propto\epsilon^{-2}h_{l}^{(\beta+\gamma)/2}K(\epsilon)$, where $K(\epsilon)$ is chosen to control the total error for increasing $L$.  Thus, for the multilevel estimator we obtained:
\begin{align*}
\hbox{variance}&:\mathcal{V}=\sum_{l=0}^{L}V_lN_l^{-1}=\epsilon^{2}K(\epsilon)^{-1}\sum_{l=0}^{L}h_{l}^{(\beta-\gamma)/2}\\
\hbox{cost}:&\thickspace\mathcal{C}=\sum_{l=0}^{L}C_lN_l=\epsilon^{-2}K(\epsilon)^{2}.
\end{align*}
One then sets $K(\epsilon)=\sum_{l=0}^{L}h_{l}^{(\beta-\gamma)/2}$ in order to have variance of $\mathcal{O}(\epsilon^2)$.  We can identify three distinct cases 
\begin{itemize}
	\item[(i).] If $\beta=\gamma$, which corresponds to the Euler-Maruyama scheme, then $K(\epsilon)=L$.  One can clearly see from the expression in $\left(\ref{levy:eq8}\right)$ that $L=\mathcal{O}(|\log(\epsilon)|)$.  Then the total cost is $\mathcal{O}(\epsilon^{-2}\log(\epsilon)^2)$ compared with single level $\mathcal{O}(\epsilon^{-3})$.
	\item[(ii).] If $\beta > \gamma$, which correspond to the Milstein scheme, then $K(\epsilon)\equiv 1$, and hence  the optimal computational cost is $\mathcal{O}(\epsilon^{-2})$.
	\item[(iii).] If $\beta < \gamma$, which is the worst case scenario, 
	then 
	it is sufficient to choose $K(\epsilon)=K_{L}(\epsilon)=
	h_{L}^{(\beta-\gamma)/2}$.  In this scenario, one can easily deduce that
	the total cost is $\mathcal{O}(\epsilon^{-(\gamma/\alpha+\kappa)})$, where $\kappa=2-\beta/\alpha$, 
	using the fact that $h_L\propto \epsilon^{1/\alpha}$.
\end{itemize}

One of the defining features of the multilevel method is that the realizations $(Y_1^l,Y_1^{l-1})$
for a given increment must be sufficiently coupled in order to obtain decaying variances $(V_l)$.  
It is clear how to accomplish this in the context of stochastic differential equations driven by Brownian motion introduced in \cite{Giles_mlmc} (see also \cite{Jay_mlpf}), where coarse icrements are obtained by summing the fine increments, but it is non-trivial how to proceed in the context of SDEs purely driven by general L\'{e}vy processes.  
A 
technique based on {Poisson thinning} has been suggested by 
\cite{GilesXia_mlmcjumpdiff} for pure-jump diffusion and by \cite{castilla_mlmcLevy} for general L\'{e}vy processes.  
In the next section, we explain an alternative 
construction of a coupled kernel based on the L{\'e}vy-Ito decomposition, 
in the same spirit as in \cite{Dereich_mlmcLevydriven}.

\subsection{Coupled Sampling for Levy-driven SDEs}\label{levy:couplekernel}

The ML methodology described in Section \ref{levy:mlmc} works by obtaining samples from some coupled-kernel associated with discretization of $\left(\ref{levy:eq1}\right)$.  We now describe how one can construct such a kernel associated with the discretization of 
the L\'{e}vy-driven SDE.  
Let $u=(y,y')\in \bbR^{2d}$.
Define a kernel, $M^l:[\mathbb{R}^d\times\mathbb{R}^d]\times[\sigma(\mathbb{R}^d)\times\sigma(\mathbb{R}^d)]\rightarrow \mathbb{R}_{+}$, where $\sigma(.)$ denotes the $\sigma$-algebra of measurable subsets, such that for $A\in \sigma(\mathbb{R}^d)$
\begin{eqnarray}\label{eq:em}
M^l(u,A) &:=& M^{l}(u,A\times\mathbb{R}^d) =\int_{A}Q^l\left(y,\mathrm{d}z\right)=Q^l(y,A), \\
\label{eq:em2}
M^{l-1}(u,A) &:=& M^{l}(u,\mathbb{R}^d\times A) =\int_{A}Q^{l-1}\left(y',\mathrm{d}z\right)=Q^{l-1}(y',A).
\end{eqnarray}

The coupled kernel $M^l$ can be constructed using the following strategy.  Using the same definitions in Section \ref{numApprox}, let $\delta_l$ and $\delta_{l-1}$ be user specified jump-thresholds for the fine and coarse approximation, respectively.  Define 
\begin{equation}\label{eq:effs}
F_0^l=\int_{B_{\delta_{l}}^{c}}x\nu(\mathrm{d}x) \quad {\rm and} \quad
F_0^{l-1}=\int_{B_{\delta_{l-1}}^{c}}x\nu(\mathrm{d}x).
\end{equation}
The 
objective is to generate a coupled pair $(Y_1^{l,l},Y_1^{l,l-1})$ given $(Y_0^{l},Y_0^{l-1})$, $h_l,h_{l-1}$
with $h_l<h_{l-1}$. 
The parameter $\delta_\ell(h_\ell)$ will be chosen such that 
$h_\ell^{-1}=\nu(B_{\delta_\ell}^c)$, 
and these determine the value of $F_0^\ell$ in \eqref{eq:effs}, for $\ell\in\{l,l-1\}$.
We now describe the construction of the coupled kernel $M^l$ and thus obtain 
the coupled pair in Algorithm \ref{levy:coupledkernelAlgo}, 
which is the same as the one presented in \cite{Dereich_mlmcLevydriven}.

\begin{algorithm}[!ht]
\caption{\textbf{: Coupled kernel $M^l$ for 
L\'{e}vy-driven SDE}}
\label{levy:coupledkernelAlgo}
\begin{algorithmic}
\STATE
\begin{enumerate}	
\item[$(1)$]  Generate fine process:
Use parts (A) to (C) of Algorithm \ref{levy:DiscreteAlgo} to generate fine process yielding			
$\Big(\Delta L_{{T}_1^{l,l}}^{l,l},\dots, \Delta L_{{T}_{{k}_l^l}^{l,l}}^{l,l}\Big)$ 
and $\Big(T_1^{l,l},\dots,{T}_{k_l^l}^{l,l}\Big)$

\item[$(2)$]  Generate coarse jump times and heights: for $j_l\in\{1,\dots,{k}_l^l\}$	,		
	
	If  $\Delta L_{{T}_{j_l}^{l,l}}^{l,l}\geq\delta_{l-1}$, 
then	
	$\Delta L_{\tilde{T}_{j_{l-1}}^{l,l-1}}^{l,l-1}=\Delta L_{{T}_{j_l}^{l,l}}^{l,l}$ and
	$\tilde{T}_{j_{l-1}}^{l,l-1}={T}_{j_l}^{l,l}$;		$j_{l-1}=j_{l-1}+1$;
			
				
%
%
%
%
%
%
%
					

\item[$(3)$]  
Refine jump times: Set $j_{l-1}=j_l=1$ and $T_0^{l,l-1}=\overline{T}^{l,l}_0=0$,

(i) 
$T_{j_{l-1}}^{l,l-1}=\min\bigg\{T_{j_{l-1}-1}^{l,l-1}+h_{l-1}, 
\min\Big\{\tilde{T}_k^{l,l-1}\geq T_{j_{l-1}-1}^{l,l-1};k\in\{1,\dots,\tilde{k}_{l-1}^{l}\}\Big\}\bigg\}$.

If $T_{j_{l-1}}^{l,l-1}=1$, set $k_{l-1}^l=j_{l-1}$; else $j_{l-1}=j_{l-1}+1$ and Go to (i).

(ii) 
$\overline{T}_{j_l}^{l,l}=\min\bigg\{T \geq \overline{T}_{j_l-1}^{l,l}; T \in \{T_k^{l,l-1}\}_{k=1}^{k_{l-1}^l} 
\cup \{T_k^{l,l}\}_{k=1}^{k_{l}^l} \bigg \}$. 

If $\overline{T}_{j_l}^{l,l}=1$, set $k_l^l=j_l$, and redefine $T_i^{l,l} := \overline{T}_i^{l,l}$ for $i=1,\dots, k_l^l$;

Else $j_l=j_l+1$ and Go to (ii).

%
%
%
\item[$(4)$]  Recursion of the process: sample $W_{T_1^{l,l}},\dots W_{T_{k_l^l}^{l,l}}$ 
	(noting 
	$\{T_{k}^{l,l-1}\}_{k=1}
	^{k_{l-1}^l} 
	\subset\{T_{k}^{l,l}\}_{k=1}^
	{k_l^l}$); 
	
		Let $m_l\in\{0,\dots,k_l^l-1\}$, 
				 $m_{l-1}\in\{0,\dots,k_{l-1}^l-1\}$, 
				 $Y_0^{l,l}=Y_0^l$ , and $Y_0^{l,l-1}=Y_0^{l-1}$;
\end{enumerate}				 
\vspace{-20pt}
	\begin{eqnarray}\label{levy:eq9}
Y^{l,l}_{T^{l,l}_{m_l+1}}&=&Y^{l,l}_{T^{l,l}_{m_l}}+
a\Big(Y^{l,l}_{T^{l,l}_{m_l}}\Big)\Big(\sqrt{\Sigma}\Delta W_{T^{l,l}_{m_l+1}} 
+ \Delta L_{{T}_{m_l+1}^{l,l}}^{l,l} + (b-F_0^l)\Delta T^{l,l}_{m_l+1} \Big ) \, , \\ 
%
\label{levy:eq10}
Y^{l,l-1}_{T^{l,l-1}_{m_{l-1}+1}}&=&Y^{l,l-1}_{T^{l,l-1}_{m_{l-1}}}+a\Big(Y^{l,l-1}_{T^{l,l-1}_{m_{l-1}}}\Big)
\Big(\sqrt{\Sigma}\Delta W_{T^{l,l-1}_{m_{l-1}}} + \Delta L_{{T}_{m_{l-1}+1}^{l,l-1}}^{l,l-1} 
+ (b-F_0^{l-1})\Delta T^{l,l-1}_{m_{l-1}}\Big) \, ,
\end{eqnarray}
where $\Delta W_{T^{l,\ell}_{m_\ell+1}}= W_{T^{l,\ell}_{m_\ell+1}}-W_{T^{l,\ell}_{m_\ell}}$ 
and $\Delta T^{l,\ell}_{m_\ell+1} = T^{l,\ell}_{m_\ell+1}-T^{l,\ell}_{m_\ell}$, for $\ell\in\{l,l-1\}$.
\end{algorithmic}
\end{algorithm}


The construction of the coupled kernel $M^l$ outlined in Algorithm \ref{levy:coupledkernelAlgo} ensures that the paths of fine and coarse processes are correlated enough to ensure that the optimal convergence rate of the multilevel algorithm is achieved.

\section{Multilevel Particle Filter for L\'{e}vy-driven SDEs}\label{levy:ML_Levy}

In this section, the multilevel particle filter will be discussed for sampling from certain types of measures which have a density with respect to a L{\'e}vy process.
We will begin by briefly reviewing the general framework 
and standard particle filter, and then we will extend these ideas into the multilevel particle filtering framework.

\subsection{Filtering and Normalizing Constant Estimation for L\'{e}vy-driven SDEs}\label{levy:FilterLevySDE}

Recall the L\'{e}vy-driven SDE \eqref{levy:eq1}.  We will use the following notation here $y_{1:n}=[y_1, y_2,\dots,y_n]$.
It will be assumed that the general probability density of interest is of the form for $n\geq 1$, for some given $y_0$
\begin{align}\label{eq:target}
\hat{\eta}^{\infty}_{n}(y_{1:n}) \propto \Big[\prod_{i=1}^{n}G_i(y_{i})Q^{\infty}(y_{i-1}, y_i)\Big], 
\end{align}
where $Q^{\infty}(y_{i-1},y)$ is the transition density of the process $\left(\ref{levy:eq1}\right)$ as a function of $y$, i.e. the density of solution $Y_1$ at observational time point $1$ given initial condition $Y_0=y_{i-1}$. It is assumed that $G_i(y_{i})$ is the conditional density  (given $y_i$) of an observation at discrete time $i$, so observations (which are omitted from our notations) are regularly observed at times $1,2,\dots$. Note that the formulation discussed here, that is for $\hat{\eta}^{\infty}_{n}$, also allows one to consider general Feynman-Kac models (of the form \eqref{eq:target}), rather than just the filters that are focussed upon in this section.
The following assumptions will be made on the likelihood functions $\{G_i\}$. Note these assumptions are needed for our later mathematical
results and do not preclude the application of the algorithm to be described.
\begin{assumption}
There are $c>1$ and $C>0$, 
such that for all $n>0$, and 
$v, v' \in \bbR^{d}$, 
$G_n$ satisfies
\begin{itemize}
\item[{\rm (i)}] 
$c^{-1} < G_n(v) < c$ \, ; 
\item[{\rm (ii)}] 
$|G_n(v) - G_n(v')| \leq C |v - v'|$ \, .
\end{itemize}
\label{asn:g}
\end{assumption}

In practice, as discussed earlier on $Q^{\infty}$ is typically analytically intractable (and we further suppose is not currently known up-to a non-negative unbiased estimate).
As a result,
we will focus upon targets associated to a discretization, i.e.~of the type
\begin{align}\label{eq:target_l}
\hat{\eta}^{l}_{n}(y_{1:n}) \propto \Big[\prod_{i=1}^{n}G_i(y_{i})Q^{l}(y_{i-1}, y_i)\Big], 
\end{align}
for $l<\infty$, where $Q^l$ is defined by $k_l$ iterates of the recursion in \eqref{eq:euler_levyd}.
Note that we will use $\hat{\eta}^{l}_{n}$ as the notation for measure and density, 
with the use clear from the context, where $l = 0,1,\dots, \infty$.  

The objective is to compute the expectation of functionals with respect to this measure, particularly at the last co-ordinate.
For any bounded and measurable function $f:\mathbb{R}^d \rightarrow\mathbb{R}$, $n\geq 1$, we will use the notation
\begin{align}\label{levy:eq12}
\hat{\eta}_{n}^{l}(f)&:=\int_{\bbR^{dn}}f(y_{n})\hat{\eta}^{l}_{n}(y_{1:n})\mathrm{d}y_{1:n}.
\end{align}

Often of interest is the computation of the un-normalized measure. 
 That is, for any bounded and measurable function $f:\mathbb{R}^d \rightarrow\mathbb{R}$
define, for $n\geq 1$
\begin{equation}\label{eq:marg_like}
\hat{\zeta}^{l}_{n}(f) := \int_{\mathbb{R}^{dn}}f(y_n) \Big[\prod_{i=1}^{n}G_i(y_{i})Q^{l}(y_{i-1}, y_i)\Big]\mathrm{d}y_{1:n}.
\end{equation}
In the context of the model under study, $\hat{\zeta}^{l}_{n}(1)$ is the marginal likelihood.

\red{Henceforth $Y^l_{1:n}$ will be used to denote a draw from $\hat{\eta}^{l}_{n}$.  
The vanilla case described earlier can be viewed as the special example in which $G_i\equiv1$ for all $i$. 
Following standard practice, realizations of random variables will be denoted with small letters.  
So, after drawing $Y^{l,(i)}_n\sim \hat{\eta}^l_n$, 
then the notation $y^{l,(i)}_n$ will be used for later references to the realized value.  
The randomness of the samples will be recalled again for MSE calculations, over potential realizations.}



\subsection{Particle Filtering}\label{levy:PF}

We will describe the particle filter that is capable of exactly approximating, that is as the Monte Carlo samples go to infinity, 
terms of the form \eqref{levy:eq12}  and \eqref{eq:marg_like}, for any fixed $l$. The particle filter 
has been studied and used extensively (see for example \cite{delm:04,doucet:01}) in many practical applications of interest.

For a given level $l$, algorithm \ref{algo:particlefilter} gives the standard particle filter.
The weights are defined as for $k\geq 1$ 
\begin{align}\label{levy:eq13}
w^{l,(i)}_{k} 
& =  w^{l,(i)}_{k-1} \frac{G_k(y^{l,(i)}_{k})}{\sum_{j=1}^{N_l}w_{k-1}^{l,(j)} G_k(y^{l,(j)}_{k})}
\end{align}
with the convention that $w^{l,(i)}_{0}=1$.
Note that the abbreviation $ESS$ stands for effective sample size which measures the variability of weights at time $k$ of the algorithm (other more efficient procedures are also possible,
but not considered). In the analysis to follow $H=1$ in  algorithm \ref{algo:particlefilter} (or rather it's extension in the next section), but this is not the case in our numerical implementations.

\cite{delm:04} (along with many other authors) have shown that 
\red{for upper-bounded, non-negative,  $G_i$, $f:\mathbb{R}^d\rightarrow\mathbb{R}$ bounded measurable (these conditions can be relaxed),}
at step 3 of algorithm \ref{algo:particlefilter}, the estimate
$$
\sum_{i=1}^{N_l} w^{l,(i)}_{n}f(y^{l,(i)}_{n})
$$
will converge almost surely to \eqref{levy:eq12}.  
In addition, if $H=1$ in  algorithm \ref{algo:particlefilter},
$$
\Big[\prod_{i=1}^{n-1}\frac{1}{N_l}\sum_{j=1}^{N_l}G_i(y^{l,(j)}_{i})\Big]\frac{1}{N_l}\sum_{j=1}^{N_l}G_n(y^{l,(j)}_{n})f(y^{l,(j)}_{n})
$$
will converge almost surely to \eqref{eq:marg_like}.

\begin{algorithm}[!ht]
\caption{\textbf{: Particle filter}}
\label{algo:particlefilter}
\begin{algorithmic}
\STATE
\begin{enumerate}
	\item[0.]  Set $k=1$; for $i=1,\dots, N_l$, draw $Y_1^{l,(i)}\sim Q^l(y_0.)$
	\item[1.]  Compute weights $\{w_1^{l,(i)}
	\}_{i=1}^{N_l}$ using $\left(\ref{levy:eq13}\right)$
	\item[2.]  Compute $ESS=\Big(\sum_{i=1}^{N_l}(w_k^{l,(i)})^{2}\Big)^{-1}$.\\  
	If {$ESS/N_l<H$} (for some threshold $H$), resample the particles $\{Y_{k}^{l,(i)} \}_{i=1}^{N_l}$ 
	and set all weights to $w_k^{l,(i)}=1/N_l$. Denote the resampled particles $\{\hat{Y}_{k}^{l,(i)} \}_{i=1}^{N_l}$.\\
	Else set $\{\hat{Y}_{k}^{l,(i)} \}_{i=1}^{N_l}=\{Y_{k}^{l,(i)} \}_{i=1}^{N_l}$
	\item[3.]  Set $k=k+1$; if $k=n+1$ stop;\\
	for $i=1\dots,N_l$, draw $Y_{k}^{l,(i)}\sim Q^l(\hat{y}_{k-1}^{l,(i)},.)$;\\
	compute weights $\{w_k^{l,(i)}\}_{i=1}^{N_l}$ by using $\left(\ref{levy:eq13}\right)$.
	Go to 2.
\end{enumerate}
\end{algorithmic}
\end{algorithm}

\subsection{Multilevel Particle Filter}\label{levy:mlpf} 

We now describe the multilevel particle filter of \cite{Jay_mlpf} for the context considered here.
The basic idea is to run $L+1$ independent algorithms, the first a particle filter as in the previous section and the remaining, 
coupled particle filters. The particle filter will sequentially (in time) approximate $\hat{\eta}_k^0$ and the coupled filters
 will sequentially approximate the couples $(\hat{\eta}^0_k,\hat{\eta}^1_k),\dots,(\hat{\eta}^{L-1}_k,\hat{\eta}^{L}_k)$.
Each (coupled) particle filter will be run with $N_l$ particles.  

The most important step in the MLPF is the coupled resampling step, which maximizes the probability of resampled indices being the same at the coarse
and fine levels. Denote the coarse and fine particles at level $l\geq 1$ and step $k\geq 1$ as $\Big(Y_{k}^{l,(i)}(l),Y_{k}^{l-1,(i)}(l)\Big)$, for $i=1,\dots,N_l$.
Equation \eqref{levy:eq13} is replaced by the following, for $k\geq 1$ 
\begin{align}\label{levy:mlweights}
w^{l,(i)}_{k}(l) 
& =  w^{l,(i)}_{k-1}(l) \frac{G_k(y^{l,(i)}_{k}(l))}{\sum_{j=1}^{N_l}w_{k-1}^{l,(j)}(l) G_k(y^{l,(j)}_{k}(l))} \\
w^{l-1,(i)}_{k}(l) 
& =  w^{l-1,(i)}_{k-1}(l) \frac{G_k(y^{l-1,(i)}_{k}(l))}{\sum_{j=1}^{N_l}w_{k-1}^{l-1,(j)}(l) G_k(y^{l-1,(j)}_{k}(l))} 
\end{align}
with the convention that $w^{l,(i)}_{0}(l)=w^{l-1,(i)}_{0}(l)=1$.

\begin{algorithm}[!ht]
\caption{\textbf{Coupled Resampling Procedure}}
\label{algo:coupresamp}
\begin{algorithmic}
\STATE For $\ell=1,\dots, N_l$
\STATE With probability $\sum_{i=1}^{N_l} \min\{ w^{l,(i)}_{k}(l), w^{l-1,(i)}_{k}(l) \}$,
\begin{enumerate} 
	\item[(i)] Sample $J$ with probability proportional to $\min\{ w^{l,(i)}_{k}(l), w^{l-1,(i)}_{k}(l) \}$ for $i=1, \dots, N_l$, 
	where the weights are computed according to  \eqref{levy:mlweights}.  
	\item[(ii)] Set $\Big (\hat{Y}^{l,(\ell)}_{k}(l), \hat{Y}^{l-1,(\ell)}_{k}(l)\Big)=\Big(Y_{k}^{l,(j)}(l),Y_{k}^{l-1,(j)}(l)\Big)$.
\end{enumerate}
\STATE
\STATE Else, with probability $1-\sum_{i=1}^{N_l} \min\{ w^{l,(i)}_{k}(l), w^{l-1,(i)}_{k}(l) \}$,
\begin{enumerate} 
	\item[(i)] Sample $J_l$ with probability proportional to $w^{l,(i)}_{k}(l) - \min\{ w^{l,(i)}_{k}(l), w^{l-1,(i)}_{k}(l) \}$ for $i=1, \dots, N_l$,   
	\item[(ii)] Sample $J_{l-1}\perp J_l$ with probability proportional to $w^{l-1,(i)}_{k}(l) - \min\{ w^{l,(i)}_{k}(l), w^{l-1,(i)}_{k}(l) \}$ for $i=1, \dots, N_l$,   
	\item[(iii)] Set $\hat{Y}^{l,(\ell)}_{k}(l)=Y_{k}^{l,(j_l)}(l)$, and $\hat{Y}^{l-1,(\ell)}_{k}(l)=Y_{k}^{l-1,(j_{l-1})}(l)$.
\end{enumerate}
\end{algorithmic}
\end{algorithm}

In the below description, we set $H=1$ (as in algorithm \ref{algo:particlefilter}), but it need not be the case. Recall that the case $l=0$
is just a particle filter. For each $1\leq l \leq L$ the following procedure is run independently.\\

\begin{algorithm}[!ht]
\caption{\textbf{Multilevel Particle filter}}
\label{algo:particlefilter}
\begin{algorithmic}
\STATE
\begin{enumerate}
	\item[0.]  Set $k=1$; for $i=1,\dots, N_l$, draw $\Big(Y_{1}^{l,(i)}(l),Y_{1}^{l-1,(i)}(l)\Big) {\sim}M^{l}\Big((y_{0},y_{0}), .\Big)$. 
	\item[1.]  Compute weights $\{ (w_1^{l,(i)}(l), w_1^{l-1,(i)}(l)) \}_{i=1}^{N_l}$ using $\left(\ref{levy:mlweights}\right)$
	\item[2.]  Compute $ESS=\min\Big \{ \Big(\sum_{i=1}^{N_l}(w_k^{l,(i)}(l))^{2}\Big)^{-1},  \Big(\sum_{i=1}^{N_l}(w_k^{l-1,(i)}(l))^{2}\Big)^{-1}\Big \}$.\\  
	If {$ESS/N_l<H$}, resample the particles $\Big\{\Big(\hat Y_{k}^{l,(i)}(l),\hat Y_{k}^{l-1,(i)}(l)\Big) \Big\}_{i=1}^{N_l}$ 
	according to Algorithm \ref{algo:coupresamp}, 
	and set all weights to $w_k^{l,(i)}(l)=w_k^{l-1,(i)}(l)=1/N_l$. 
	Else set $\Big\{\Big(\hat Y_{k}^{l,(i)}(l),\hat Y_{k}^{l-1,(i)}(l)\Big) \Big\}_{i=1}^{N_l} = \Big\{\Big(Y_{k}^{l,(i)}(l),Y_{k}^{l-1,(i)}(l)\Big) \Big\}_{i=1}^{N_l}$
	\item[3.]  Set $k=k+1$; if $k=n+1$ stop;\\
	for $i=1\dots,N_l$, draw $\Big(Y_{k}^{l,(i)}(l),Y_{k}^{l-1,(i)}(l)\Big) {\sim}M^{l}\Big((\hat{y}_{k-1}^{l,(i)}(l),\hat{y}_{k-1}^{l-1,(i)}(l)), .\Big)$; \\ 
	compute weights $\{(w_k^{l,(i)}(l), w_k^{l-1,(i)}(l))\}_{i=1}^{N_l}$ by using \eqref{levy:mlweights}.  Go to 2.
\end{enumerate}
\end{algorithmic}
\end{algorithm}

%
%
The samples generated by the particle filter for $l=0$ at time $k$ are denoted $Y_{k}^{0,(i)}(0)$, $i\in\{1,\dots,N_0\}$ (we are assuming {$H=1$}).

To estimate the quantities \eqref{levy:eq12} and \eqref{eq:marg_like} (with $l=L$) \cite{Jay_mlpf,Jay_mlnormconst} 
show that in the case of discretized diffusion processes 
$$
\hat{\eta}^{\rm ML,L}_n(f) =  \sum_{l=1}^L 
\Big(
\frac{\sum_{i=1}^{N_l}G_n(y_n^{l,(i)}(l))f(y_n^{l,(i)}(l))}{\sum_{i=1}^{N_l}G_n(y_n^{l,(i)}(l))} -
\frac{\sum_{i=1}^{N_l}G_n(y_n^{l-1,(i)}(l))f(y_n^{l-1,(i)}(l))}{\sum_{i=1}^{N_l}G_n(y_n^{l-1,(i)}(l))}
\Big) +
$$
$$
\frac{\sum_{i=1}^{N_0}G_n(y_n^{0,(i)}(0))f(y_n^{0,(i)}(0))}{\sum_{i=1}^{N_0}G_n(y_n^{0,(i)}(0))}
$$
and
$$
\hat{\zeta}^{\rm ML,L}_n(f) =\sum_{l=1}^L 
\Big(
\Big[\prod_{i=1}^{n-1}\frac{1}{N_l}\sum_{j=1}^{N_l}G_i(y^{l,(j)}_{i}(l))\Big]\frac{1}{N_l}\sum_{j=1}^{N_l}G_n(y^{l,(j)}_{n}(l))f(y^{l,(j)}_{n}(l)) -
$$
$$
\Big[\prod_{i=1}^{n-1}\frac{1}{N_l}\sum_{j=1}^{N_l}G_i(y^{l-1,(j)}_{i}(l))\Big]\frac{1}{N_l}\sum_{j=1}^{N_l}G_n(y^{l-1,(j)}_{n}(l))f(y^{l-1,(j)}_{n}(l))
\Big) 
$$
\begin{equation}\label{eq:nc_est_ml}
+
\Big[\prod_{i=1}^{n-1}\frac{1}{N_0}\sum_{j=1}^{N_0}G_i(y^{0,(j)}_{i}(0))\Big]\frac{1}{N_0}\sum_{j=1}^{N_0}G_n(y^{0,(j)}_{n}(0))f(y^{0,(j)}_{n}(0))
\end{equation}
converge almost surely to $\hat{\eta}^L_n(f)$ and $\hat{\zeta}^{L}_n(f)$ respectively, 
as min$\{N_l\} \rightarrow \infty$. Furthermore, both can significantly improve over the particle filter, 
for $L$ and $\{N_l\}_{l=1}^L$ appropriately chosen to depend upon a target mean square error (MSE). 
By improve, we mean that 
the work is less than the particle filter to achieve a given MSE with respect to the continuous time limit, 
under appropriate assumptions on the diffusion. 
We show how the $N_0,\dots,N_L$ can be chosen in Section \ref{sec:theorem}.  
Note that for positive $f$ the estimator above $\hat{\zeta}^{\rm ML,L}_n(f)$ can take negative values with positive probability.

We remark that the coupled resampling method can be improved as in \cite{Sen_coupling}. 
We also remark that the approaches of \cite{hous_ml, jacob} could potentially be used here. However,
none of these articles has sufficient supporting theory to verify a reduction in cost of the ML procedure.

\subsubsection{Theoretical Result}\label{sec:theorem}

We conclude this section with 
a technical theorem.  We consider only $\hat{\eta}^{\rm ML,L}_n(f)$, 
but this can be extended to $\hat{\zeta}^{\rm ML,L}_n(f)$, similarly to \cite{Jay_mlnormconst} .
The proofs are given in Appendix \ref{app:theo}.

Define $\mathcal{B}_b(\mathbb{R}^d)$ as the bounded, measurable and real-valued functions
on $\mathbb{R}^d$ and $\textrm{Lip}(\mathbb{R}^d)$ as the globally Lipschitz real-valued functions on $\mathbb{R}^d$. 
Define the space 
$\cA=\mathcal{B}_b(\mathbb{R}^d)\cap\textrm{Lip}(\mathbb{R}^d)$
with the norm 
$\|\varphi\| = \sup_{x\in \mathbb{R}^d} |\varphi(x)| + \sup_{x,y \in \mathbb{R}^d} \frac{|\varphi(x) - \varphi(y)|}{|x-y|}$.

%
%


\red{The following assumptions will be required.
\begin{assumption}\label{asn:delh}
For all $h_l>0$, there exists a solution $\delta_l(h_l)$ to the equation 
$h_l = 1/\nu(B_{\delta_l(h_l)}^c)$, and some $C,\beta_1>0$ such that 
$\delta_l(h_l) \leq C h_l^{\beta_1}$.
\end{assumption}

Denote by $\check{Q}^{l,l-1}((y,y'),\cdot)$ the coupling of the Markov transitions $Q^l(y,\cdot)$ and $Q^{l-1}(y',\cdot)$, $(y,y')\in\mathbb{R}^{2d}$ as in Algorithm \ref{levy:coupledkernelAlgo}.
\begin{assumption}
There is a $\gamma>0$ such that
\begin{itemize}
\item 
$\bbE[{\rm COST}(\check{Q}^{l,l-1})] 
= \cO(h_l^{-\gamma})$,
\end{itemize}
where $\bbE[{\rm COST}(\check{Q}^{l,l-1})]$ 
is the cost to simulate one sample from the kernel $\check{Q}^{l,l-1}$.
\label{asn:mlrates}
\end{assumption} }

Below $\mathbb{E}$ denotes expectation w.r.t.~the law of the particle system.
\begin{theorem} \label{thm:mlpf}
Assume (\ref{ass:main}, \ref{asn:g},\ref{asn:delh}, \ref{asn:mlrates}). 
Then for any $n\geq 0$, there exists a $C<+\infty$ 
such that for
 $\varepsilon>0$ given and a particular $L>0$, and $\{N_l\}_{l=0}^L$ depending upon $\varepsilon$, $h_{0:L}$ only
and
$f \in \cA$, 
\[
\mathbb{E}\Bigg[
\Bigg(\hat{\eta}^{\rm ML,L}_n(f)
- \hat{\eta}_n^\infty(f) \Bigg)^2   \Bigg] \leq
C
\varepsilon^2, 
\]
for the cost $\cC(\varepsilon) := \bbE[{\rm COST}(\varepsilon)]$ 
given in the second column of Table \ref{tab:mlpfcases}.

\begin{table}[h]
\begin{center}
  \begin{tabular}{ | c || c | c |}
    \hline
    CASE & 
    $\cC(\varepsilon)$ \\ \hline\hline
    $\beta>2\gamma$ & 
    $\cO(\varepsilon^{-2})$ \\ \hline
    $\beta=2\gamma$  & 
    $\cO(\varepsilon^{-2}\log(\varepsilon)^2)$ \\ \hline
    $\beta<2\gamma$  & 
    $\cO(\varepsilon^{-2+(\beta-2\gamma)/(\beta)})$ \\
    \hline
  \end{tabular}
\end{center}
\caption{The three cases of MLPF, and associated 
cost $\cC(\varepsilon)$. $\beta$ is as Lemma \ref{prop:uni}}
\label{tab:mlpfcases}
\end{table}

\end{theorem}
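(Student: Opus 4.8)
The plan is to decompose the mean-square error of $\hat\eta^{\rm ML,L}_n(f)$ into a bias term and a variance term, bound each separately using the structural assumptions, and then optimize the allocation $\{N_l\}_{l=0}^L$ and the truncation level $L$ as a function of $\varepsilon$. Writing $\hat\eta^{\rm ML,L}_n(f) - \hat\eta^\infty_n(f) = \big(\hat\eta^{\rm ML,L}_n(f) - \hat\eta^L_n(f)\big) + \big(\hat\eta^L_n(f) - \hat\eta^\infty_n(f)\big)$, the second difference is purely the discretization bias of the Feynman--Kac measure at level $L$; since $\hat\eta^{\rm ML,L}_n$ is a telescoping estimator whose expectation is $\hat\eta^L_n(f)$ (by the unbiasedness-in-expectation of each coupled-filter increment, after accounting for the standard particle-filter normalization), the cross term vanishes in expectation and the MSE splits as a sum of a squared bias $\big(\hat\eta^L_n(f) - \hat\eta^\infty_n(f)\big)^2$ and the variance $\mathbb{V}\big[\hat\eta^{\rm ML,L}_n(f)\big]$.

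For the bias I would combine Assumption \ref{asn:delh} with the strong-error estimate of \cite{Dereich_mlmcLevydriven}: the Euler discretization of the L\'evy-driven SDE with jump threshold $\delta_l(h_l)\le Ch_l^{\beta_1}$ gives $\mathbb{E}|Y_1^l - Y_1^\infty|^2 = \mathcal{O}(h_l^\beta)$ for the appropriate $\beta$ (the one in Lemma \ref{prop:uni}), hence $|\hat\eta^l_n(f) - \hat\eta^\infty_n(f)| = \mathcal{O}(h_l^{\beta/2})$ for $f\in\cA$ by propagating the one-step Lipschitz perturbation through the $n$ Feynman--Kac steps using Assumptions \ref{ass:main}(i) and \ref{asn:g}; this is where the Lipschitz half of \ref{ass:main}(i) and \ref{asn:g}(ii), together with the lower bound $c^{-1}<G_n$ for stability of the normalization, are used. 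For the variance, I would invoke the coupled-particle-filter analysis of \cite{Jay_mlpf,Jay_mlnormconst}: each increment estimator has variance $\mathcal{O}(N_l^{-1}\, \mathbb{E}|f(Y_1^l)-f(Y_1^{l-1})|^2) = \mathcal{O}(N_l^{-1} h_l^{\beta/2})$ — note the $h_l^{\beta/2}$ rather than $h_l^\beta$, which is the known degradation in the particle-filter (as opposed to i.i.d.) setting and is precisely why the cost table has $\beta$ versus $2\gamma$ rather than $\beta$ versus $\gamma$ — and the increments being run independently, the total variance is $\sum_{l=0}^L \mathcal{O}(N_l^{-1}h_l^{\beta/2})$.

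With $h_l = S_0^{-l}$, the squared bias is $\mathcal{O}(h_L^\beta)$, so I set $L$ so that $h_L^{\beta/2}\asymp\varepsilon$, i.e.\ $L\asymp \beta^{-1}\log_{S_0}(\varepsilon^{-2})$. The Lagrange-multiplier optimization of $\sum N_l^{-1}V_l$ subject to fixed cost $\sum N_l C_l$, with $V_l \propto h_l^{\beta/2}$ and $C_l\propto h_l^{-\gamma}$ (Assumption \ref{asn:mlrates}), gives $N_l \propto \varepsilon^{-2} h_l^{(\beta/2+\gamma)/2} K(\varepsilon)$ with $K(\varepsilon) = \sum_{l=0}^L h_l^{(\beta/2-\gamma)/2}$, leading to cost $\mathcal{C}(\varepsilon) = \mathcal{O}(\varepsilon^{-2}K(\varepsilon)^2)$. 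Then the three regimes $\beta/2 > \gamma$, $=\gamma$, $<\gamma$ — equivalently $\beta>2\gamma$, $\beta=2\gamma$, $\beta<2\gamma$ — give $K(\varepsilon)=\mathcal{O}(1)$, $\mathcal{O}(L)=\mathcal{O}(|\log\varepsilon|)$, and $\mathcal{O}(h_L^{(\beta/2-\gamma)/2})$ respectively, and substituting $h_L\asymp \varepsilon^{2/\beta}$ in the last case reproduces the exponent $-2+(\beta-2\gamma)/\beta$ in Table \ref{tab:mlpfcases}. The main obstacle is the variance bound for the coupled particle filter: one must show that the coupled resampling of Algorithm \ref{algo:coupresamp} does not destroy the coupling built into $M^l$, so that the per-step discrepancy between the fine and coarse empirical measures stays $\mathcal{O}(h_l^{\beta/4})$ in $L_2$ uniformly in the time index $n$; this requires a careful inductive argument over $k=1,\dots,n$ controlling how the $\min\{w^l,w^{l-1}\}$ mass propagates, and is exactly the content imported from \cite{Jay_mlpf} together with the new L\'evy-specific strong-error input — establishing that this input is available under Assumptions \ref{ass:main} and \ref{asn:delh} (the role of Lemma \ref{prop:uni}) is the crux.
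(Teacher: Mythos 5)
Your proposal follows essentially the same route as the paper: the paper's proof of Theorem \ref{thm:mlpf} is a two-line reduction to \cite[Theorem 4.3]{Jay_mlpf}, with the only genuinely new content being the L\'evy-specific strong-error, Lipschitz-stability and total-variation lemmas verified in the appendix (your ``crux''), and your $h_l^{\beta/2}$-degraded per-level variance, the allocation $N_l\propto \varepsilon^{-2}h_l^{(\beta/2+\gamma)/2}K(\varepsilon)$, and the three cost regimes $\beta\gtrless 2\gamma$ reproduce Table \ref{tab:mlpfcases} exactly. One small correction: the self-normalized particle-filter estimator is \emph{not} unbiased for $\hat{\eta}^L_n(f)$, so the cross term in your decomposition does not vanish exactly; one instead uses $(a+b)^2\leq 2a^2+2b^2$ and absorbs the $\mathcal{O}(N_l^{-1})$ particle bias of each increment, which is precisely how the analysis you import from \cite{Jay_mlpf} handles it.
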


\begin{proof}
The proof is essentially identical to \cite[Theorem 4.3]{Jay_mlpf}. The only
difference is to establish analogous results to \cite[Appendix D]{Jay_mlpf}; this is
done in the appendix of this article.
\end{proof}

%

\section{Numerical Examples}\label{levy:numerics}

In this section, we compare our proposed multilevel particle filter method with the vanilla particle filter method.  
A target accuracy parameter $\epsilon$ will be specified 
and the cost to achieve an error below this target accuracy will be estimated.  
The performance of the two algorithms will be compared in two applications of SDEs driven by general L\'{e}vy process:   
filtering of a partially observed L\'{e}vy process (S\&P $500$ stock price data) and pricing of a path dependent option.  
In each of these two applications, we let $X=\{X_t\}_{t\in[0,K]}$ denote a symmetric stable L\'{e}vy process, 
i.e. $X$ is a $\left(\nu,\Sigma, b \right)$-L\'{e}vy process, and Lebesgue density of the L\'{e}vy measure given by
\begin{align}\label{levy:eq15}
\nu(\mathrm{d}x)&=c|x|^{-1-\phi}\bbone_{[-x^*,0)}(x)\mathrm{d}x+
c|x|^{-1-\phi}\bbone_{(0,x^*]}(x)\mathrm{d}x,\quad x\in\mathbb{R}\setminus\{0\},
\end{align}
with $c>0$, $x^*>0$ (the truncation threshold) and index $\phi\in(0,2)$.  
The 
parameters $c$ and $x^*$ are both $1$ for all the examples considered. 
The L\'{e}vy-driven SDE considered here has the form
\begin{align}\label{levy:eq16}
\mathrm{d}Y_t=a(Y_{t^{-}})\mathrm{d}X_t,\quad Y_{0}=y_{0}, 
\end{align}
with 
$y_0$ assumed known, and $a$ satisfies Assumption \ref{ass:main}(i).  
Notice Assumption \ref{ass:main}(ii-iii) are also satisfied by the L{\'e}vy process defined above.
In the examples illustrated below, we take $a(Y_t)=Y_t, y_0=1$, and $\phi=0.5$. 


\begin{remark}[Symmetric Stable L\'{e}vy process of index $\phi\in(0,2)$]
\label{levy:stable}
	
	In approximating the L\'{e}vy-driven SDE $\left(\ref{levy:eq16}\right)$, Theorem $2$ of \cite{Dereich_mlmcLevydriven} provided 
	asymptotic error bounds for the strong approximation by the Euler scheme.  
	If the driving L\'{e}vy process $X_t$ has no Brownian component, that is $\Sigma=0$, 
	then the $L^2$-error, denoted $\sigma_{h_l}^2$, is bounded by
	\begin{align*}
	\sigma_{h_l}^2&\leq C(\sigma^2(\delta_l)+{ |b-F_0^l|^2}h_l^2),
	\end{align*}
	and for $\Sigma > 0$,
	\begin{align*}
	\sigma_{h_l}^2&\leq C(\sigma^2(\delta_l)+h_l|\log(h_l)|),
	\end{align*}
	for a fixed constant $C<\infty$ (that is the Lipschitz constant), 
	where $\sigma^2(\delta_l) := \int_{B_{\delta_l}} |x|^2 \nu(dx)$. 
	Recall that $\delta_l(h_l)$ is chosen such that $h_l=1/{\nu(B_{\delta_l}^{c})}$.
	One obtains the analytical expression
	\begin{align}\label{levy:eq17}
	\sigma^2(\delta_l) & = \frac{2c}{2-\phi}\delta_l(h_l)^{2-\phi}\leq C\delta_l^{2-\phi},
	\end{align}
	for some constant $C>0$.  
	One can also analytically compute 
    \begin{align*}
	\nu(B_{\delta_l}^{c})&=\frac{2c(\delta_l^{-\phi}-{x^*}^{-\phi})}{\phi}.
    \end{align*}
    Now, setting $h_l=2^{-l}$, one obtains
	\begin{align}\label{levy:eq18}
	\delta_l&=\Big(\frac{2^{l}\phi}{2c}+{x^*}^{-\phi}\Big)^{-1/\phi},
	\end{align}
	so that the L\'{e}vy measure $\nu(B_{\delta_l}^{c})=2^l$, 
	\red{hence verifying assumption \ref{asn:delh} for this example.}  
	Then, one can easily bound $\left(\ref{levy:eq18}\right)$ by
	\begin{align*}
	|\delta_l|\leq C2^{-l/\phi}
	\end{align*}
	for some constant $C>0$.  
	So $\delta_l=\mathcal{O}(h_{l}^{1/\phi})$.  Using $\left(\ref{levy:eq17}\right)$-$\left(\ref{levy:eq18}\right)$ and the error bounds for $\Sigma=0$, one can straightforwardly obtain strong error rates for the approximation of SDE driven by stable L\'{e}vy process in terms of the single accuracy parameter $h_l$.  This is given by
	\begin{align*}
	\sigma_{h_l}^2&\leq C (h_l^{(2-\phi)/\phi}+|b-F_0^l|^2h_l^{2}).
	\end{align*}
	{Thus, if $b-F_0^l \neq 0$, 
	the strong error rate $\beta$ of Assumption \ref{asn:mlrates}(ii) associated with a particular discretization level $h_l$ is given by 
	\begin{align}\label{levy:eq19}
	\beta & = \min\Big(\frac{2-\phi}{\phi},2\Big).
	\end{align}
	Otherwise it is just given by $(2-\phi)/\phi$.
	}
\end{remark}

In the examples considered below, the original L\'{e}vy process has no drift and Brownian motion components, that is $\Sigma=b=0$.  Due to the linear drift correction $F_0^l$ in the compensated compound Poisson process, the random jump times are refined such that the time differences between successive jumps are bounded by the accuracy parameter $h_l$ associated with the Euler discretization approximation methods in $\left(\ref{levy:eq4}\right)$ and $\left(\ref{levy:eq9}\right)$-$\left(\ref{levy:eq10}\right)$.  However, since $F_0^l=0$ here, due to symmetry, this does not affect the rate, as described in Remark \ref{levy:stable}.

We start with verification of the weak and strong error convergence rates, $\alpha$ and $\beta$ for the forward model.  
To this end the quantities $|\mathbb{E}[Y^{l}_1-Y^{l-1}_1]|$ and $\mathbb{E}[|Y^{l}_1-Y^{l-1}_1|^2]$ are computed over increasing levels $l$.  
Figure \ref{levy:fig1} shows these computed values plotted against $h_l$ on base-$2$ logarithmic scales.  
A fit of a linear model gives rate $\alpha=1.3797$, and similar simulation experiment gives $\beta=2.7377$.
This is consistent with the rate $\beta=3$ and $\alpha=\beta/2$ 
from Remark \ref{levy:stable} $\left(\ref{levy:eq19}\right)$.

\begin{figure}[!ht]
	\centering
	\subfigure{{\includegraphics[height=7cm]{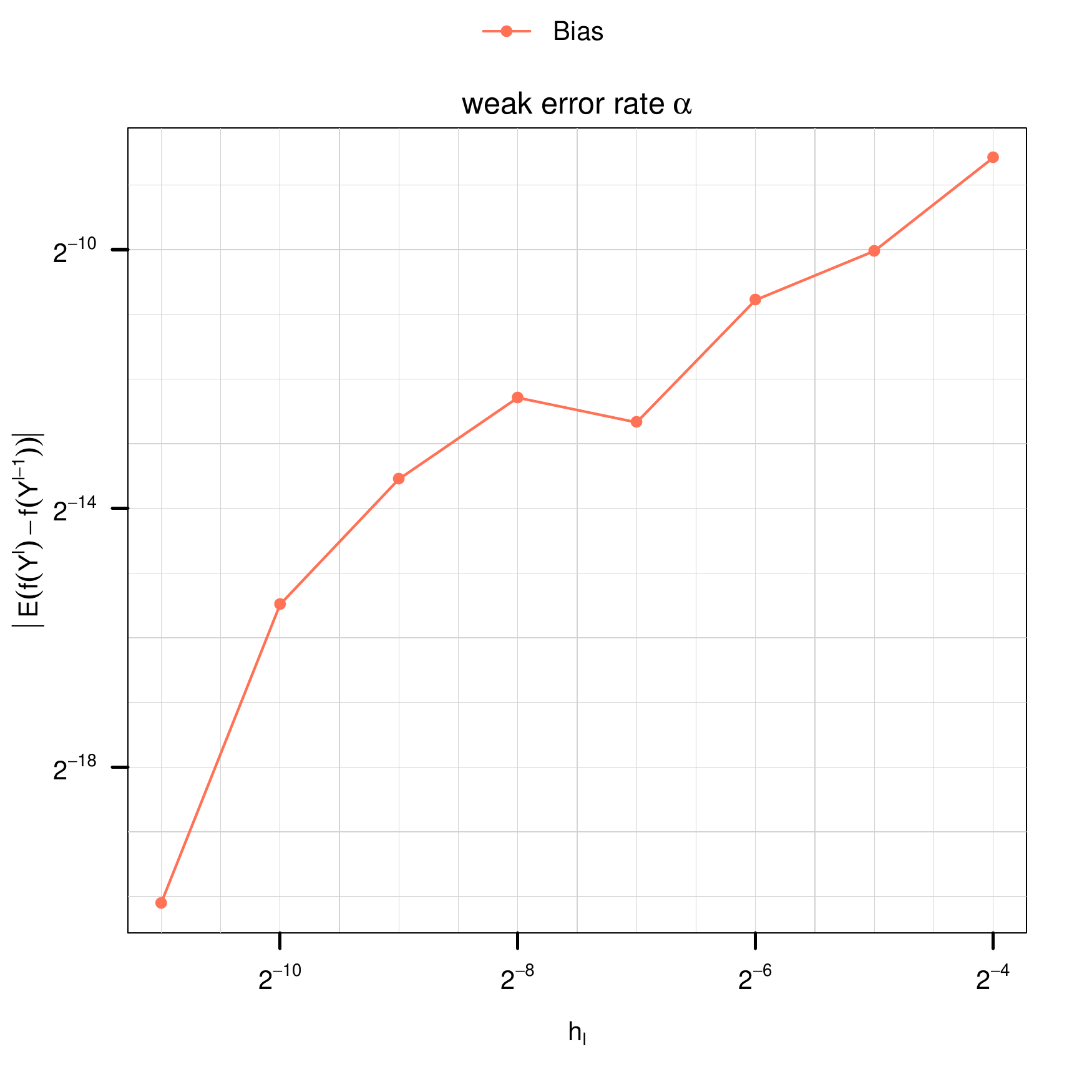}}}\,\,
	\subfigure{{\includegraphics[height=7cm]{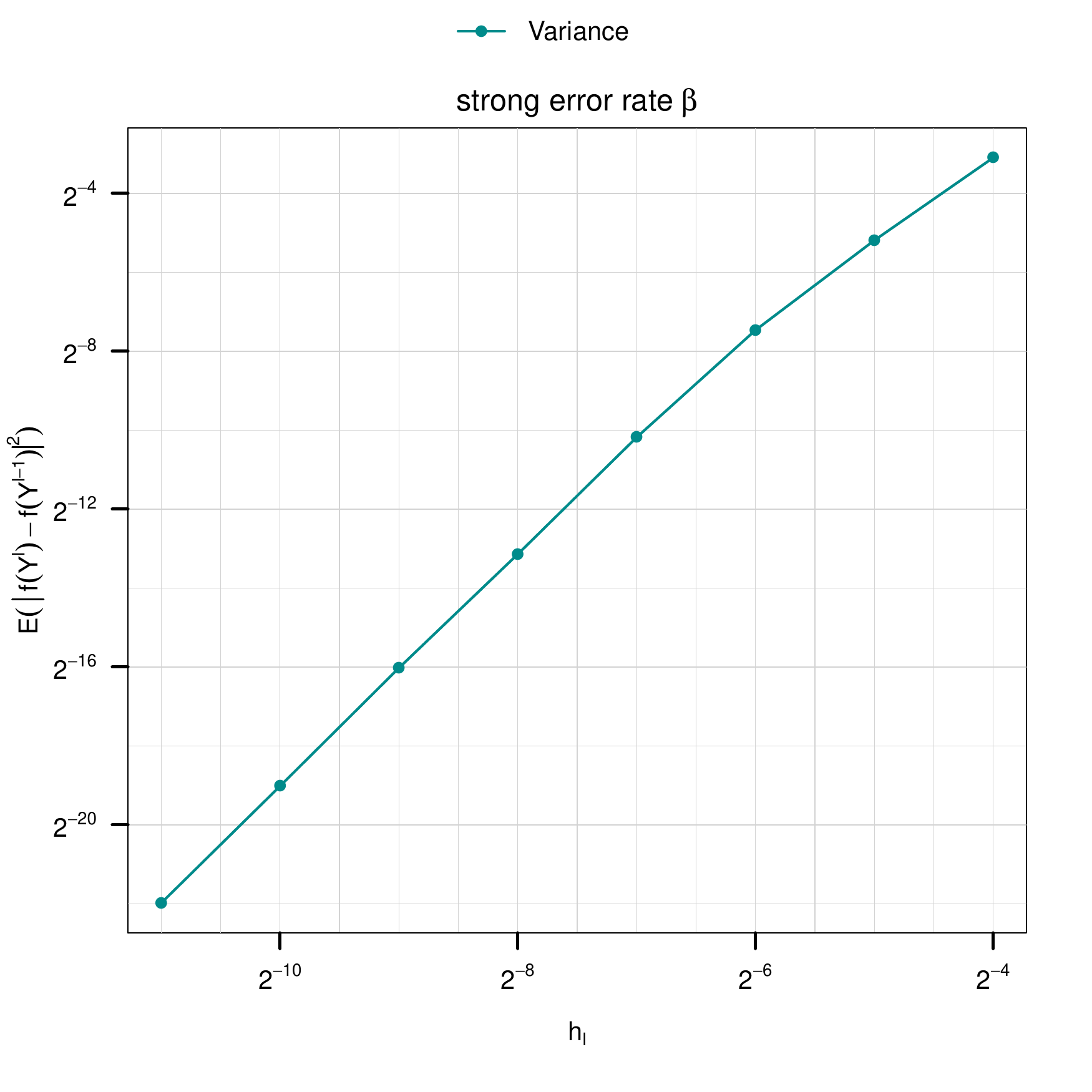}}}
	\caption{Empirical weak and strong error rates estimates}
	\label{levy:fig1}
\end{figure}

We begin our comparison of the MLPF and PF algorithms starting with the filtering of a partially observed L\'{e}vy-driven SDE and then consider the knock out barrier call option pricing problem.

\subsection{Partially observed data}

In this section 
we 
consider filtering a partially observed L\'{e}vy process.  
Recall that the L\'{e}vy-driven SDE takes the form $\left(\ref{levy:eq16}\right)$.  
In addition, partial observations $\{z_1,\dots,z_n\}$ are available with $Z_k$ obtained at time $k$ and 
$Z_k|(Y_{k}=y_k)$ has a density function $G_k(y_{k})$ (with observation is omitted from the notation and appearing only as subscript $k$).  
The observation density is Gaussian with mean $y_k$ and variance 1.
We aim to estimate $\mathbb{E}[f(Y_{k})|z_{1:k}]$ for some test function $f(y)$.  In this application, we consider the real daily S\&P $500$ $\log$ return data (from August $3$, $2011$ to July $24$, $2015$, normalized to unity variance).  
We shall take the test function $f(y)=e^{y}$ for the example considered below, which we note does not satisfy the assumptions of Theorem \ref{thm:mlpf}, and hence challenges the theory. In fact the results are roughly equivalent to the case 
$f(y)=e^{y}\mathbb{I}_{\{|y|<10\}}$, where $\mathbb{I}_A$ is the indicator function on the set $A$, which was also considered and does satisfy the required assumptions.

\begin{figure}[!ht]
	\centering
	\subfigure{{\includegraphics[height=8cm,width=12cm]{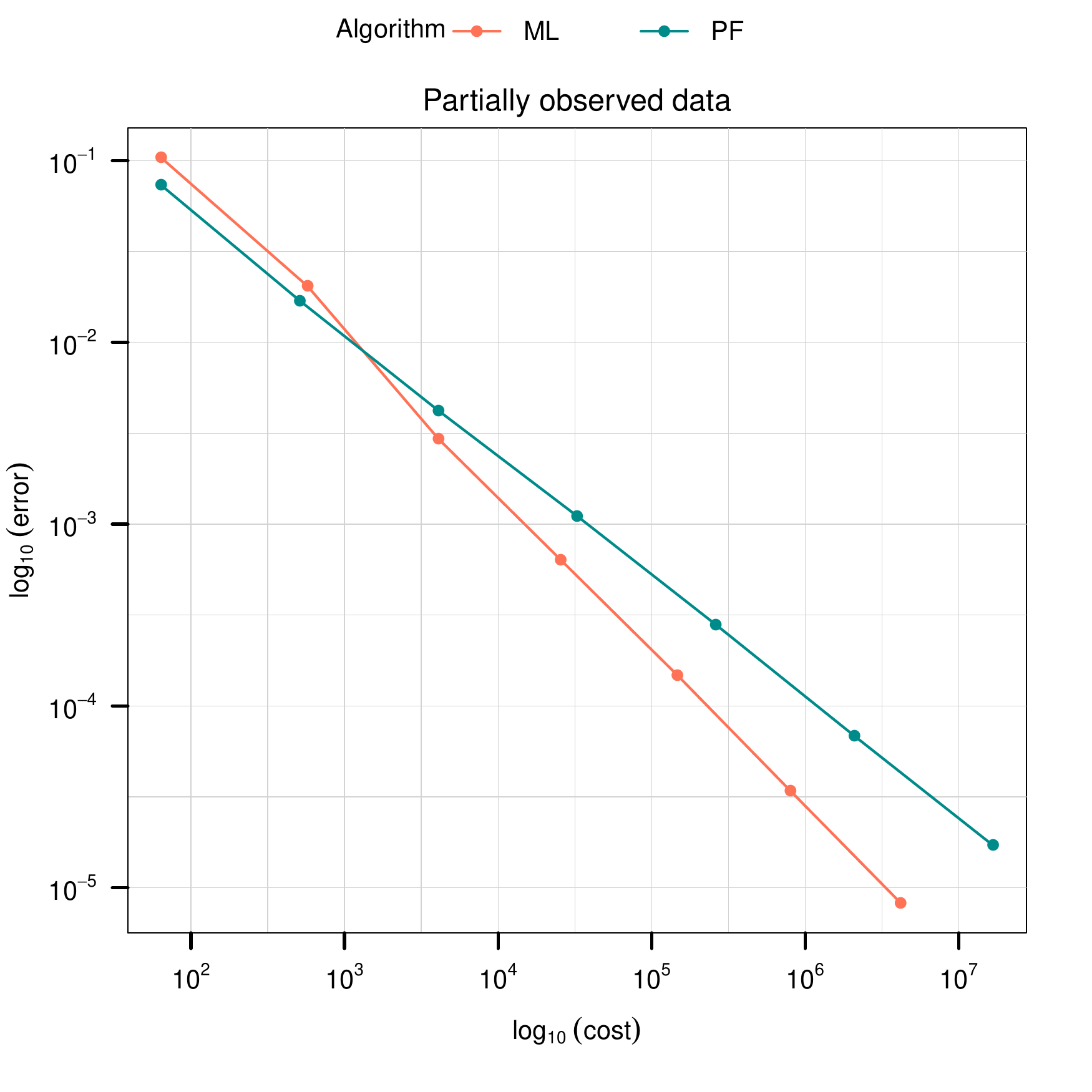}}}\,
	\caption{Mean square error against computational cost for filtering with partially observed data.}
	\label{levy:fig3}
\end{figure}

The error-versus-cost plots on base $10$ logarithmic scales for PF and MLPF are shown in Figure \ref{levy:fig3}.  The fitted linear model of $\log$ MSE against $\log$ Cost has a slope of $-0.667$ and $-0.859$ for PF and MLPF respectively.  These results again verify numerically the expected theoretical asymptotic behaviour of computational cost as a function of MSE for both standard cost and ML cost.  


\subsection{Barrier Option}

Here we consider computing the value of a discretley monitored knock out barrier option (see e.g.~\cite{glasserman} and the references therein). Let $Y_0\in[a,b]$ for some $0<a<b<+\infty$ known and let $Q^{\infty}(y_{i-1},y)$ be the transition density of the process 
as in \eqref{levy:eq16}. Then the value of the barrier option (up-to a known constant) is
$$
\int_{\bbR^{n}} \max\{y_n-S,0\}\prod_{i=1}^n \mathbb{I}_{[a,b]}(y_i)Q^{\infty}(y_{i-1},y_i) dy_{1:n}
$$
for $S>0$ given.
As seen in \cite{JayOption} the calculation of the barrier option is non-trivial, in the sense that even importance sampling may not work well. We consider the (time) discretized version
$$
\int_{\bbR^{n}} \max\{y_n-S,0\}\prod_{i=1}^n \mathbb{I}_{[a,b]}(y_i)Q^{l}(y_{i-1},y_i) dy_{1:n}.
$$
Define a sequence of probability densities, $k\in\{1,\dots,n\}$
\begin{equation}\label{eq:barpos}
\hat{\eta}_k^l(y_{1:k}) \propto \tilde{G}_k(y_k)\prod_{i=1}^k \mathbb{I}_{[a,b]}(y_i)Q^{l}(y_{i-1},y_i) = \prod_{i=1}^k 
\left( \frac{\tilde{G}_i(y_i)}{\tilde{G}_{i-1}(y_{i-1})} \right) \mathbb{I}_{[a,b]}(y_i) Q^{l}(y_{i-1},y_i)
\end{equation}
for some non-negative collection of functions $\tilde{G}_k(y_k)$, $k\in\{1,\dots,n\}$ to be specified.
Recall that $\hat{\zeta}_n^l$ denotes the un-normalized density associated to $\hat{\eta}_n^l$.
Then the value of the time discretized barrier option is exactly
\begin{equation}\label{eq:barnc}
\hat{\zeta}_{n}^l\Big(\frac{f}{\tilde{G}_n}\Big) = \int_{\bbR^{n}} \max\{y_n-S,0\}\prod_{i=1}^n \mathbb{I}_{[a,b]}(y_i)Q^{l}(y_{i-1},y_i) dy_{1:n}
\end{equation}
where $f(y_n)=\max\{y_n-S,0\}$. Thus, we can apply the MLPF targetting the sequence $\{\hat{\eta}_k^l\}_{ k\in\{1,\dots,n\},l \in\{0,\dots,L\} }$ 
and use our normalizing constant estimator \eqref{eq:nc_est_ml} to estimate \eqref{eq:barnc}. 
\red{If $\tilde{G}_n=|f|$, then we have an optimal importance distribution, 
in the sense that we are estimating the integral of the constant function $1$ 
and the variance is minimal \cite{rubinstein}.  
However, noting the form of the effective potential above \eqref{eq:barpos},
this can result in infinite weights (with adaptive resampling as done here), and so some regularization is necessary.
We bypass this issue by choosing 
$\tilde{G}_k(y_k) =  |y_k-S|^{\kappa_k}$, where $\kappa_k$ is an annealing parameter with $\kappa_0 = 0$ and $\kappa_n =  1$.
We make no claim that this is the best option, but it guides us to something reminiscent of the optimal thing, 
and with well-behaved weights, in practice.  We tried also $\max\{y_n-S,\varepsilon\}$, with $\varepsilon=0.001$, 
and the results are almost identical.}

For this example we choose $S=1.25,a=0,b=5,y_0=1,n=100$. 
The $N_l$ are chosen as in the previous example.
The error-versus-cost plots for PF and MLPF are shown in Figure \ref{levy:fig2}.  Note that the bullets in the graph correspond to different choices of $L$ (for both PF and MLPF, $2\leq L\leq8$).   The fitted linear model of $\log$ MSE against  $\log$ cost  has a slope of $-0.6667$ and $-0.859$ for PF and MLPF respectively.  These numerical results are consistent with the expected theoretical asymptotic behaviour of 
MSE$\propto$Cost$^{-1}$ for the multilevel method.  
The single level particle filter achieves the asymptotic behaviour of the standard Monte Carlo method with 
MSE$\propto$Cost$^{-2/3}$. 

\begin{figure}[!ht]
	\centering
	\subfigure{{\includegraphics[height=8cm,width=12cm]{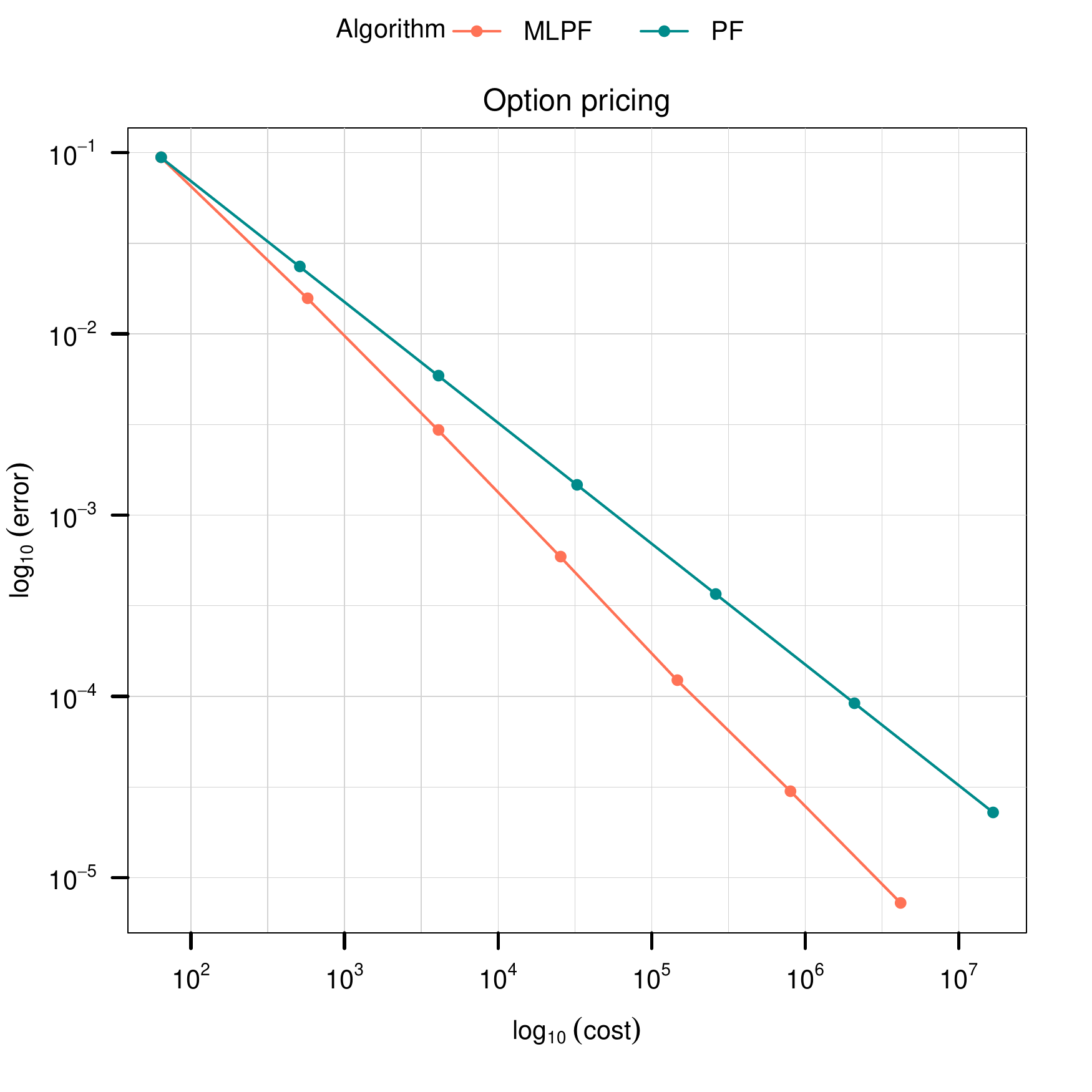}}}\,
	\caption{Mean square error against computational cost for the knock out barrier option example.}
	\label{levy:fig2}
\end{figure}

%
%

\subsubsection*{Acknowledgements}
AJ was supported by Singapore ministry of education AcRF tier 2 grant R-155-00-161-112 and he is affiliated with the CQF, RMI and ORA cluster at NUS.  He was also supported by a King Abdullah University of Science and Technology Competitive Research Grant round 4, Ref:2584.  KJHL was sponsored by the Laboratory Directed Research and Development Program of Oak Ridge National Laboratory, managed by UT-Battelle, LLC, for the U. S. Department of Energy.

\appendix

\section{Theoretical results}
\label{app:theo}

Our proof consists of following the proof of \cite{Jay_mlpf}. To that end
all the proofs of \cite[Appendices A-C]{Jay_mlpf} are the same for the approach
in this article (note that one needs Lemma \ref{lem:lip_cont_Q} of this article along the way). 
One must verify the analogous results of \cite[Appendix D]{Jay_mlpf},
which is what is done in this appendix.

The predictor at time $n$, level $l$, is denoted as $\eta^l_{n}$. 
Denote the 
total variation norm 
as $\|\cdot\|_{\textrm{tv}}$.
 For $\varphi\in\textrm{Lip}(\bbR^d)$, $\|\varphi\|_{\textrm{Lip}}:= \sup_{x,y \in \mathbb{R}^d} \frac{|\varphi(x) - \varphi(y)|}{|x-y|}$ is the Lipschitz constant.
For ease (in abuse) of notation, $Q^l$ defined by $k_l$ iterates of the recursion in \eqref{eq:euler_levyd} is used as a Markov kernel below.
We set for $\varphi\in\mathcal{B}_b(\bbR^d)$, $y\in\mathbb{R}^d$
$$
Q^l(\varphi)(y) := \int_{\mathbb{R}^d} \varphi(y')Q^l(y,y')dy'.
$$
Recall, for $l\geq 1$, $\check{Q}^{l,l-1}((y,y'),\cdot)$ is the coupling of the kernels $Q^l(y,\cdot)$ and $Q^{l-1}(y',\cdot)$ as in Algorithm \ref{levy:coupledkernelAlgo}.
For $\varphi\in\mathcal{B}_b(\bbR^{2d})$ we use the notation for $(y,y')\in\mathbb{R}^{2d}$:
$$
\check{Q}^{l,l-1}(\varphi)(y,y') := \int_{\mathbb{R}^{2d}} \varphi(y^l,y^{l-1})\check{Q}^{l,l-1}((y,y'),d(y^l,y^{l-1}))
$$
and note that for $\varphi\in\mathcal{B}_b(\bbR^d)$
$$
\check{Q}^{l,l-1}(\varphi\otimes 1)(y,y')  = Q^l(\varphi)(y), \quad \check{Q}^{l,l-1}(1 \otimes \varphi)(y,y')  = Q^{l-1}(\varphi)(y')
$$
where $\otimes$ denotes the tensor product of functions, e.g.~$\varphi\otimes 1$ denotes $\varphi(y^l)$ in the integrand associated
to $\check{Q}^{l,l-1}((y,y'),d(y^l,y^{l-1}))$.

\red{
Let $T_j(t) = \max \{T_j \in \bbT ; T_j < t \}$, 
and let $(\Delta X)^l_{t}  = X^l_t - X^l_{T_j(t)}$,
where $X^l_t$ is the natural continuation of the 
discretized L\'evy process \eqref{levy:eq4}.
Define the continuation of the discretized driven process by
$$
Y^l_t= Y^l_{T_j(t)} + a(Y^l_{T_j(t)}) (\Delta X)^l_{t} \, .
$$
}
Let $Y^l_1\sim Q^l(y,\cdot)$ and independently $Y^{'l}_1\sim Q^l(y',\cdot)$. We denote expectations w.r.t.~these random variables as $\mathbb{E}$.

\begin{lem}\label{prop:lip_strong}
Assume (\ref{ass:main}).  
Then there exists a $C<+\infty$ such that for any $L\geq l\geq 0$, 
and $(y,y')\in \bbR^{2d}$
$$
\bbE |Y^l_1 - Y^{'l}_1|^2  \leq 
C | y - y' |^2 \, .
$$
\end{lem}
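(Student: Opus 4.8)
The plan is to couple the two Euler schemes \eqref{eq:euler_levyd} by driving them with the \emph{same} realisation of the discretised L\'evy path $X^l$ (the same Brownian increments and the same compound-Poisson jumps) and to run a Gronwall argument on the squared difference of the two continuations $Y^l_\cdot,Y^{\prime l}_\cdot$ defined immediately above the lemma. First I would put the driver in martingale form: summing \eqref{levy:eq4} and recalling that $L^{\delta_l}_t=\sum_{j=1}^{N^l(t)}\Delta L^l_{\tilde T^l_j}-F^l_0 t$ is the compensated compound Poisson process, one has $X^l_t=\sqrt{\Sigma}W_t+L^{\delta_l}_t+bt$, so $M_t:=X^l_t-bt$ is a square-integrable martingale for the natural filtration of the driving noise, with matrix bracket $d\langle M\rangle_s=\Gamma^l\,ds$, $\Gamma^l:=\Sigma+\int_{B^c_{\delta_l}}xx^\top\nu(dx)$, and $\operatorname{tr}\Gamma^l\le\operatorname{tr}\Sigma+\int|x|^2\nu(dx)\le(d'+1)C^2$ by Assumption \ref{ass:main}(ii)--(iii), crucially \emph{uniformly in $l$}. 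The continuation then solves $Y^l_t=y+\int_0^t a(Y^l_{T_j(s)})\,dM_s+b\int_0^t a(Y^l_{T_j(s)})\,ds$, and likewise $Y^{\prime l}$ started at $y'$.

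Next, set $e_t:=Y^l_t-Y^{\prime l}_t$ and $\Delta a_s:=a(Y^l_{T_j(s)})-a(Y^{\prime l}_{T_j(s)})$, so that $|\Delta a_s|\le C|e_{T_j(s)}|$ by Assumption \ref{ass:main}(i) and $s\mapsto\Delta a_s$ is bounded (second part of (i)) and predictable. Subtracting the two integral equations gives $e_t=(y-y')+\int_0^t\Delta a_s\,dM_s+b\int_0^t\Delta a_s\,ds$. Applying the $L^2$ isometry to the stochastic integral, Cauchy--Schwarz to the Lebesgue integral, the bound $\operatorname{tr}\Gamma^l\le(d'+1)C^2$, $|b|\le C$ and $t\le1$, I obtain $\mathbb{E}|e_t|^2\le 3|y-y'|^2+C_1\int_0^t\mathbb{E}|e_{T_j(s)}|^2\,ds$ with $C_1$ depending only on $C$ and $d'$. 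A priori finiteness of $g(t):=\sup_{u\le t}\mathbb{E}|e_u|^2$ on $[0,1]$ follows from $|a|\le C$ and the finite second moments of $M$; since $\mathbb{E}|e_{T_j(s)}|^2\le g(s)$ and the right side is non-decreasing, $g(t)\le 3|y-y'|^2+C_1\int_0^t g(s)\,ds$, and Gronwall yields $g(1)\le 3e^{C_1}|y-y'|^2$. Taking $t=1$ gives $\mathbb{E}|Y^l_1-Y^{\prime l}_1|^2\le C|y-y'|^2$ with $C$ independent of $l$.

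The argument is essentially routine; the only points needing care are that every constant be uniform in $l$ --- which is exactly where Assumption \ref{ass:main}(ii) enters through $\int_{B^c_{\delta_l}}|x|^2\nu(dx)\le\int|x|^2\nu(dx)\le C^2$ --- and that $a$ is frozen at the grid point $T_j(s)$ rather than at $s$, which forces the $\sup$-version $g(t)$ in the Gronwall step. I would deliberately avoid the purely discrete route of iterating \eqref{eq:euler_levyd} after conditioning on the jump-time skeleton $\{T^l_j\}$: there $X^l-b\cdot$ fails to be a martingale conditionally on the skeleton, and one picks up spurious per-step terms of order $|F^l_0|(\Delta T^l_{m+1}+h_l)$ that would require an additional summation-by-parts to absorb; passing to the continuous-time continuation $Y^l_\cdot$ already introduced in the text avoids this.
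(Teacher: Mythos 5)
Your coupling (same Brownian increments and same compound-Poisson jumps for both copies) is exactly the one the paper uses, and your martingale decomposition $X^l_t-bt=\sqrt{\Sigma}W_t+L^{\delta_l}_t$ with bracket controlled uniformly in $l$ via $|F_0^l|^2\le\int_{B_{\delta_l}^c}|x|^2\nu(dx)\le C^2$ is sound; the route (continuous-time It\^o isometry plus Gronwall) genuinely differs from the paper's, which conditions on the grid $\sigma(\bbT)$, derives the one-step contraction $\bbE[|e_{T_{j+1}}|^2\,|\,\sigma(\bbT)]\le(1+Ch_l)\,\bbE[|e_{T_j}|^2\,|\,\sigma(\bbT)]$, iterates to get $(1+Ch_l)^N|y-y'|^2$, and then integrates out the Poisson variable $N$ to obtain $e^C|y-y'|^2$. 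Incidentally, your stated reason for avoiding that route is not quite right: the paper does not need a summation by parts --- the conditional drift $(b-F_0^l)(t-T_j(t))$ contributes a cross term of size $C h_l\,\bbE[|e_{T_j(t)}|^2\,|\,\sigma(\bbT)]$ which is absorbed directly into the multiplicative factor $(1+Ch_l)$.

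There is, however, a genuine gap in your Gronwall step. You set $g(t):=\sup_{u\le t}\bbE|e_u|^2$ and assert $\bbE|e_{T_j(s)}|^2\le g(s)$. This does not follow: $T_j(s)$ is a \emph{random} time, determined by the jump times $\tilde T^l_k$ of the Poisson process, and these same jump times drive the process $e$. A supremum of expectations over deterministic times does not dominate the expectation of the process evaluated at a dependent random time; that inequality would hold if $T_j(s)$ were independent of $e$ or if one first conditioned on $\sigma(\bbT)$ --- but conditioning on $\sigma(\bbT)$ destroys the martingale property of $M$ (as you yourself observe), which is precisely what your It\^o-isometry bound relies on. The standard repair is to run Gronwall on $G(t):=\bbE\sup_{u\le t}|e_u|^2$ instead: then $\bbE|e_{T_j(s)}|^2\le G(s)$ holds pathwise-then-in-expectation since $T_j(s)\le s$, and the martingale term is controlled by Doob's $L^2$ maximal inequality, $\bbE\sup_{u\le t}\bigl|\int_0^u\Delta a_s\,dM_s\bigr|^2\le 4\,\bbE\bigl|\int_0^t\Delta a_s\,dM_s\bigr|^2$, at the cost only of a factor $4$. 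With that modification your argument closes and yields the same uniform-in-$l$ constant; as written, the step is unjustified.
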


\begin{proof}
Let $t\in [0,1]$. We have 
\[\begin{split} 
|Y^l_t - Y^{'l}_t|^2  
= & ~~ |Y^l_{T_j(t)} - Y^{'l}_{T_j(t)}|^2 + 2 \left (Y^l_{T_j(t)} - Y^{'l}_{T_j(t)} \right)^T 
\left (a(Y^l_{T_j(t)}) - a(Y^{'l}_{T_j(t)}) \right) (\Delta X)^l_{t}  \\
 & + \left | \left(a(Y^l_{T_j(t)}) - a(Y^{'l}_{T_j(t)})\right) (\Delta X)^l_{t}\right|^2  \, .
\end{split}\]

Let $N = \# \{T_j \leq 1\}$ be the number of time-steps
before time $1$, 
and denote $\bbT = \{\tilde{T}_1,\dots, \tilde{T}_{N}, N\}$,
where $\tilde{T}_j$ and $T_j$ are generated by Algorithm \ref{levy:DiscreteAlgo}.
The sigma algebra generated by these random variables is denoted $\sigma(\bbT)$.

Following from the independence of $Y^l_{T_j(t)}$ and  $(\Delta X)^l_{t}$ conditioned on $\sigma(\bbT)$, we have
\[\begin{split} 
\bbE \left[ |Y^l_t - Y^{'l}_t|^2 \Big | \sigma(\bbT) \right]  
 & \leq  \bbE \left[ |Y^l_{T_j(t)} - Y^{'l}_{T_j(t)}|^2 \Big | \sigma(\bbT) \right]  \\
 & + 
2 \bbE \left[ \left (Y^l_{T_j(t)} - Y^{'l}_{T_j(t)} \right)^T 
\left (a(Y^l_{T_j(t)}) - a(Y^{'l}_{T_j(t)}) \right)  \Big | \sigma(\bbT) \right] \bbE \left[ (\Delta X)^l_{t}   | \sigma(\bbT) \right] 
\end{split}\]
\begin{equation}\label{eq:la11}
+ \bbE \left[  \left | a(Y^l_{T_j(t)}) - a(Y^{'l}_{T_j(t)}) \right |^2  \Big | \sigma(\bbT) \right] \bbE \left[ |(\Delta X)^l_{t}|^2 \Big | \sigma(\bbT) \right].
\end{equation}
The inequality is a result of the last term which uses the definition of the matrix 2 norm.
Note that $\bbE[W_t ]= \bbE[L_t] = 0$, so that $\bbE[(\Delta X)^l_{t} | \sigma(\bbT)] = (b - F_0^l) (t - T_j(t)) $.
In addition, \eqref{eq:efl}, Jensen's inequality and the fact that $B^c_{\delta_l} \subset B_{\delta_l}$, 
together imply that 
\begin{equation}\label{eq:eflin}
| F_0^l |^2 \leq \int_{B_{\delta_l}^c} |x|^2 \nu(dx) \leq \int |x|^2 \nu(dx) \, .
\end{equation}

We have 
$$
\bbE \left[ \left (Y^l_{T_j(t)} - Y^{'l}_{T_j(t)} \right)^T 
\left (a(Y^l_{T_j(t)}) - a(Y^{'l}_{T_j(t)}) \right)  \Big | \sigma(\bbT) \right] \bbE \left[ (\Delta X)^l_{t}   | \sigma(\bbT) \right] 
$$
$$
= \bbE \left[ \left (Y^l_{T_j(t)} - Y^{'l}_{T_j(t)} \right)^T 
\left (a(Y^l_{T_j(t)}) - a(Y^{'l}_{T_j(t)}) \right) \Big | \sigma(\bbT) \right]  (b - F_0^l) (t - T_j(t))
$$
\begin{equation}\label{eq:l12}
\leq C^2 h_l \bbE \left[ \left |Y^l_{T_j(t)} - Y^{'l}_{T_j(t)} \right |^2 
 \Big | \sigma(\bbT) \right]  
\end{equation}
The inequality follows from Cauchy-Schwarz, 
definition of the matrix 2 norm, 
Assumption \ref{ass:main}(i), (iii), and (ii) in connection with \eqref{eq:eflin} and 
the definition of the construction of $\{T_j\}$ in Algorithm \ref{levy:DiscreteAlgo}, 
so that $|t-T_j(t)| \leq h_l$.

%
%


Note also 
\begin{equation}\label{eq:square}
\bbE \left[ |(\Delta X)^l_{t}|^2 \Big | \sigma(\bbT) \right] \leq C^2( |t-T_j(t)| + |t-T_j(t)|^2 ) \leq  C^2h_l\, , 
\end{equation}
by Assumption \ref{ass:main} (ii) and (iii), and since 
$h_l\leq 1$ by definition.
Returning to \eqref{eq:la11}, and using \eqref{eq:square} and \eqref{eq:l12}, 
and Assumption \ref{ass:main} (i) again on the last term, we have
\begin{equation}\label{eq:fin1step}
\bbE \left[ |Y^l_t - Y^{'l}_t|^2 \Big | \sigma(\bbT) \right] \leq \bbE \left[ |Y^l_{T_j(t)} - Y^{'l}_{T_j(t)}|^2 \Big | \sigma(\bbT) \right] (1 + Ch_l) \, ,
\end{equation}
where the value of the constant is different.

Therefore, in particular
$$
\bbE \left[ |Y^l_{T_{j+1}} - Y^{'l}_{T_{j+1}} |^2 \Big | \sigma(\bbT) \right] \leq \bbE \left[ |Y^l_{T_j} - Y^{'l}_{T_j}|^2 \Big | \sigma(\bbT) \right] (1 + Ch_l) \, .
$$
By applying \eqref{eq:fin1step} recursively, we have
$$
\bbE \left[ |Y^l_1 - Y^{'l}_1|^2 \Big | \sigma(\bbT) \right] \leq 
| y - y' |^2 (1+Ch_l)^{N}   \, .
$$
Note that 
$\bbP(N=n) = \frac{(\lambda^{\delta_l})^n}{n!} e^{-\lambda^{\delta_l}}$,
and $\lambda^{\delta_l}=h_l^{-1}$ by design, as described in Section \ref{numApprox}.
Taking expectation with respect to $\sigma(\bbT)$ gives 
\[\begin{split}
\bbE |Y^l_1 - Y^{'l}_1|^2 & \leq 
\left ( \sum_{n\geq 0} 
\frac{(h_l^{-1} (1+Ch_l) )^{n} }{n!} e^{-h_l^{-1}} \right )
| y - y' |^2  \\
& = e^{C} | y - y' |^2  \, .
\end{split}
\]
The result follows by redefining $C$.
\end{proof}

\begin{lem}\label{lem:lip_cont_Q}
Assume (\ref{ass:main}). Then there exists a 
$C<+\infty$ such that for any $L\geq l\geq 0$, $(y,y')\in \bbR^{2d}$, and $\varphi\in\mathcal{B}_b(\bbR^d)\cap\textrm{\emph{Lip}}(\bbR^d)$
$$
|Q^l(\varphi)(y) - Q^l(\varphi)(y')| \leq C\|\varphi\|_{\textrm{\emph{Lip}}}~|y-y'|.
$$
\end{lem}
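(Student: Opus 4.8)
The plan is to bound the difference by conditioning on the random time-grid $\sigma(\bbT)$ and then using a coupling argument identical in spirit to Lemma \ref{prop:lip_strong}. Concretely, let $Y^l_1\sim Q^l(y,\cdot)$ and $Y^{'l}_1\sim Q^l(y',\cdot)$ be \emph{coupled} by driving both recursions with the same realization of the underlying L\'evy increments (same Brownian increments, same jump times, same jump heights) — this is a legitimate coupling since the increments do not depend on the initial condition. Then for $\varphi\in\mathcal{B}_b(\bbR^d)\cap\textrm{Lip}(\bbR^d)$,
\[
|Q^l(\varphi)(y) - Q^l(\varphi)(y')| = |\bbE[\varphi(Y^l_1) - \varphi(Y^{'l}_1)]| \leq \|\varphi\|_{\textrm{Lip}}~\bbE|Y^l_1 - Y^{'l}_1| \leq \|\varphi\|_{\textrm{Lip}}~\bbE[|Y^l_1 - Y^{'l}_1|^2]^{1/2},
\]
the last step by Jensen/Cauchy--Schwarz. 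Now Lemma \ref{prop:lip_strong} (whose proof used exactly this coupled continuation of the discretized driven process) gives $\bbE|Y^l_1 - Y^{'l}_1|^2 \leq C|y-y'|^2$, and taking square roots yields the claim with constant $C^{1/2}$ (renamed $C$).

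First I would make explicit that the coupling used to continue the discretized process in the preamble (the "natural continuation" $Y^l_t$) is precisely the one where both copies share the driving noise, so that Lemma \ref{prop:lip_strong} applies verbatim to the pair $(Y^l_1, Y^{'l}_1)$. Second, I would record the elementary inequality $|\varphi(a)-\varphi(b)| \leq \|\varphi\|_{\textrm{Lip}}|a-b|$ and push it through the expectation. Third, apply Cauchy--Schwarz to pass from the first moment of $|Y^l_1-Y^{'l}_1|$ to its second moment, and finally invoke Lemma \ref{prop:lip_strong}. The whole argument is two or three lines once the coupling is named.

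The only real obstacle is bookkeeping rather than mathematics: one must be careful that $Q^l(\varphi)(y)$ is genuinely $\bbE[\varphi(Y^l_1)\mid Y^l_0 = y]$ under the \emph{same} probability space used in Lemma \ref{prop:lip_strong}, i.e.\ that the coupling realizing the bound $\bbE|Y^l_1-Y^{'l}_1|^2\leq C|y-y'|^2$ has the correct marginals $Q^l(y,\cdot)$ and $Q^l(y',\cdot)$. This is immediate from the construction in Algorithm \ref{levy:DiscreteAlgo}, since the increments $(\Delta X)^l_{T^l_{m+1}}$ are simulated independently of the initial condition, so feeding the same increment stream into both recursions is a valid coupling. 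No further assumptions beyond Assumption \ref{ass:main} are needed, since Lemma \ref{prop:lip_strong} already consumes them.
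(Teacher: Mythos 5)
Your proof is correct and follows essentially the same route as the paper: express the difference as $|\bbE[\varphi(Y^l_1)-\varphi(Y^{'l}_1)]|$, pass to the second moment via Jensen/Cauchy--Schwarz together with the Lipschitz bound on $\varphi$, and invoke Lemma \ref{prop:lip_strong}. Your explicit identification of the synchronous coupling (shared driving increments) is a useful clarification, since the paper's phrasing ``independently'' is misleading --- the proof of Lemma \ref{prop:lip_strong} indeed relies on both recursions sharing the same $(\Delta X)^l_t$.
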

\begin{proof}
We have
\begin{eqnarray*}
|Q^l(\varphi)(y) - Q^l(\varphi)(y')|
& = & | \bbE (\varphi(Y^l_1) - \varphi(Y^{'l}_1))| \\
& \leq & (\bbE |\varphi(Y^l_1) - \varphi(Y^{'l}_1)|^2)^{1/2} \\
& \leq & \|\varphi\|_{\textrm{Lip}} (\bbE |Y^l_1 - Y^{'l}_1|^2)^{1/2}\, 
\end{eqnarray*}
where Jensen has been applied to go to the second line and that 
$\varphi\in\textrm{Lip}(\bbR^d)$ to the third. The proof is concluded via  Lemma \ref{prop:lip_strong}. 
\end{proof}


\begin{lem}\label{prop:uni}
Assume (\ref{ass:main}, \ref{asn:delh}). 
Then there exists $C<+\infty$ such that for any $L\geq l\geq 1$
$$
\sup_{\varphi\in\mathcal{A}}\sup_{y\in\bbR^d}|Q^l(\varphi)(y)-Q^{l-1}(\varphi)(y)| \leq C h_{l}^{\frac{\beta}{2}}.
$$
\end{lem}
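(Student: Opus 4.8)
The plan is to reduce the claim to an $L^2$ strong-error estimate for the coupled Euler scheme and then invoke a discrete Gronwall argument in the spirit of Lemma \ref{prop:lip_strong}. Fix $\varphi\in\cA$ with $\|\varphi\|\leq 1$ and $y\in\bbR^d$. Applying the coupling $\check{Q}^{l,l-1}$ at the diagonal point $(y,y)$ and using its marginal property,
$$Q^l(\varphi)(y)-Q^{l-1}(\varphi)(y) = \check{Q}^{l,l-1}(\varphi\otimes 1 - 1\otimes\varphi)(y,y) = \bbE\big[\varphi(Y^l_1)-\varphi(Y^{l-1}_1)\big],$$
where $(Y^l_1,Y^{l-1}_1)\sim\check{Q}^{l,l-1}((y,y),\cdot)$ is produced by Algorithm \ref{levy:coupledkernelAlgo}. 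Since $\varphi$ is globally Lipschitz with $\|\varphi\|_{\textrm{Lip}}\leq 1$, Jensen's inequality gives $|Q^l(\varphi)(y)-Q^{l-1}(\varphi)(y)| \leq (\bbE|Y^l_1-Y^{l-1}_1|^2)^{1/2}$, uniformly in $\varphi$ and $y$. It therefore suffices to prove $\bbE|Y^l_1-Y^{l-1}_1|^2 \leq C h_l^{\beta}$ for a constant $C$ independent of $y$ and $l$.

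For this strong-error bound I would run a recursion along the common grid. By step $(3)$ of Algorithm \ref{levy:coupledkernelAlgo} the refined fine times $\{T^{l,l}_m\}$ contain the refined coarse times $\{T^{l,l-1}_m\}$, and every jump of size $\geq\delta_{l-1}$ is a grid point of both; hence, writing $D_t := Y^{l,l}_t - Y^{l,l-1}_t$ with the coarse process extended by its continuation as in the appendix, the one-step increment of $D$ over a fine subinterval splits into a stability term $[a(Y^{l,l})-a(Y^{l,l-1})]\cdot(\Delta X)^{l}$ and a local-error term $a(Y^{l,l-1})\cdot\big[(\Delta X)^{l}-(\Delta X)^{l-1}\big]$ in which the common large jumps and the common Brownian increment cancel exactly, leaving only (a) the jumps with size in $B_{\delta_{l-1}}\setminus B_{\delta_l}$ kept by the fine path and dropped by the coarse path, (b) the mismatch of the drift corrections $b-F^l_0$ versus $b-F^{l-1}_0$, and (c) the coarser Euler freezing of $a$ over an interval of length $\leq h_{l-1}$. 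Conditioning on $\sigma(\bbT)$ exactly as in Lemma \ref{prop:lip_strong}, the stability term contributes a multiplicative factor $(1+Ch_l)$ per step via Assumption \ref{ass:main}(i),(iii) and \eqref{eq:eflin}, while the local errors are martingale increments after compensation whose conditional second moments are bounded by $\int_{B_{\delta_{l-1}}\setminus B_{\delta_l}}|x|^2\nu(dx)\leq\sigma^2(\delta_{l-1})$, by $|F^l_0-F^{l-1}_0|^2\leq\sigma^2(\delta_{l-1})$ (Jensen as in \eqref{eq:eflin}), and by $C^2 h_{l-1}$, respectively. Summing over the $N$ steps, taking expectation over the Poisson number of steps of mean $h_l^{-1}$ so that $\bbE[(1+Ch_l)^N]\leq e^{C}$, and using Assumption \ref{asn:delh} together with $h_{l-1}\leq C h_l$ to convert $\delta_{l-1}$-powers into powers of $h_l$, one obtains $\bbE|Y^l_1-Y^{l-1}_1|^2\leq C h_l^{\beta}$ for the resulting exponent $\beta>0$; equivalently, this $L^2$ bound is precisely the strong-error estimate of \cite{Dereich_mlmcLevydriven} for the coupled scheme, which may be quoted directly through $\bbE|Y^l_1-Y^{l-1}_1|^2\leq 2\sigma_{h_l}^2+2\sigma_{h_{l-1}}^2$. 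Combined with the first paragraph, this proves the lemma.

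I expect the main obstacle to be the careful bookkeeping in the Gronwall recursion over the random grid: one must verify that, conditionally on the jump times and sizes, the fine and coarse driving increments differ only through the three sources listed — in particular that the large jumps shared by both paths occur at common grid points and hence cancel in $D_t$ — and then propagate the additive local errors through the $(1+Ch_l)$-contraction while uncoupling the Poisson count of time steps. A secondary point is checking that the exponent $\beta$ produced here is the one used in Theorem \ref{thm:mlpf} and Table \ref{tab:mlpfcases}; when $\Sigma>0$ the extra $h_{l-1}|\log h_{l-1}|$ term from the Brownian discretization must be absorbed into the stated rate as in Remark \ref{levy:stable}.
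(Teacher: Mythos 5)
Your proposal follows essentially the same route as the paper's proof: reduce the weak difference to the coupled $L^2$ strong error via the marginal property of $\check{Q}^{l,l-1}$ at the diagonal, Jensen's inequality and the Lipschitz property of $\varphi$, and then invoke the strong-error bound $\bbE|Y^l_1-Y^{l-1}_1|^2\leq C h_l^{\beta}$ from \cite[Theorem 2]{Dereich_mlmcLevydriven}, i.e.~\eqref{eq:strong}. The extra Gronwall-style sketch of that strong-error estimate is not needed (the paper simply cites the theorem, with Assumption \ref{asn:delh} converting the $\delta_l$-dependence into powers of $h_l$ as in the remark following the lemma), but it is consistent with the argument.
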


\begin{proof}
We have
$$
|Q^l(\varphi)(y)-Q^{l-1}(\varphi)(y)| = \Big|\int_{\mathbb{R}^{2d}}\varphi(y^l) \check{Q}^{l,l-1}((y,y),d(y^l,y^{l-1})) - 
\int_{\mathbb{R}^{2d}}\varphi(y^{l-1}) \check{Q}^{l,l-1}((y,y),d(y^l,y^{l-1}))\Big|.
$$
Using Jensen's inequality 
yields
$$
|Q^l(\varphi)(y)-Q^{l-1}(\varphi)(y)| \leq \Big(\int_{\mathbb{R}^{2d}}(\varphi(y^l) -\varphi(y^{l-1}))^2 \check{Q}^{l,l-1}((y,y),d(y^l,y^{l-1}))\Big)^{1/2}.
$$
Recall for $\varphi \in \cA$ there exists $C<+\infty$ such that $|\varphi(y^l) -\varphi(y^{l-1})| \leq C |y^l -y^{l-1}|$.
By \cite[Theorem 2]{Dereich_mlmcLevydriven}, there exists $C<+\infty$ such that for any $y\in\mathbb{R}^d$, $l\geq 1$
\begin{equation}\label{eq:strong}
\int_{\mathbb{R}^{2d}}|y^l -y^{l-1}|^2 \check{Q}^{l,l-1}((y,y),d(y^l,y^{l-1})) \leq C h_l^\beta \, 
\end{equation}
The proof is then easily concluded.
\end{proof}

\begin{rem}
\red{To verify \eqref{eq:strong}  note that $\delta_l(h_l)$ is chosen as a function of $h_l$ here for simplicity, 
and by Assumption \ref{asn:delh} 
it can be bounded by $C h_l^{\beta_1}$ for some $\beta_1$.
The bounds in Theorem 2 of \cite{Dereich_mlmcLevydriven} can therefore be written as 
the sum of two terms $C(h_l^{\beta_1} + h_l^{\beta_2})$, 
and $\beta = \min\{\beta_1,\beta_2\}$.}  
See remark \ref{levy:stable} for calculation of $\beta$ in the example considered in this paper.  
\end{rem}


\begin{lem}\label{prop:strong_cor}
Assume (\ref{ass:main},\ref{asn:delh}).
Then there exists $C<+\infty$ and $\beta>0$ such that for any $L\geq l\geq 1$, and $(y,y')\in \bbR^{2d}$, 
$$
\Big(\int_{\mathbb{R}^{2d}}|y^l -y^{l-1}|^2 \check{Q}^{l,l-1}((y,y'),d(y^l,y^{l-1}))\Big)^{1/2}
\leq C (|y-y'| +  h_l^{\beta/2}) \,  
$$
where $\beta$ is as in Lemma \ref{prop:uni}.
\end{lem}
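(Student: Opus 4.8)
The plan is to bound the $L^2$ distance between the coupled endpoints by inserting a third auxiliary process — the coarse Euler scheme started from $y$ (rather than $y'$) but driven by exactly the same noise used inside the coupled kernel — and then to split via Minkowski into a ``same starting point'' term, controlled by \eqref{eq:strong}, and an ``initial-condition perturbation of the coarse scheme'' term, controlled by the argument of Lemma \ref{prop:lip_strong}.

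Concretely, I would realise $\check{Q}^{l,l-1}((y,y'),\cdot)$ on a single probability space. Sample the driving noise $\omega$ of Algorithm \ref{levy:coupledkernelAlgo} — the jump times and heights, their thinning into coarse jumps, the refined grids, and the Brownian increments on the fine grid — none of which depends on the initial conditions. Since the fine recursion \eqref{levy:eq9} starts from $y$ and never uses $y'$, write $Y^l_1 = Y^l_1(\omega)$ for the fine endpoint, and write $Y^{l-1}_1(\omega;u)$ for the endpoint of the coarse recursion \eqref{levy:eq10} started from $u\in\mathbb{R}^d$ and driven by $\omega$. Then $(Y^l_1(\omega),Y^{l-1}_1(\omega;y'))\sim\check{Q}^{l,l-1}((y,y'),\cdot)$, whereas $(Y^l_1(\omega),Y^{l-1}_1(\omega;y))\sim\check{Q}^{l,l-1}((y,y),\cdot)$, because changing the coarse initial condition alters only the coarse trajectory. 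Minkowski's inequality gives
\[
\Big(\bbE |Y^l_1 - Y^{l-1}_1(\cdot\,;y')|^2\Big)^{1/2}
\leq \Big(\bbE |Y^l_1 - Y^{l-1}_1(\cdot\,;y)|^2\Big)^{1/2}
+ \Big(\bbE |Y^{l-1}_1(\cdot\,;y) - Y^{l-1}_1(\cdot\,;y')|^2\Big)^{1/2}.
\]

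For the first term on the right, $(Y^l_1,Y^{l-1}_1(\cdot\,;y))$ is a draw from $\check{Q}^{l,l-1}((y,y),\cdot)$, so \eqref{eq:strong} (established in the proof of Lemma \ref{prop:uni}) bounds it by $C^{1/2}h_l^{\beta/2}$. For the second term, $Y^{l-1}_1(\cdot\,;y)$ and $Y^{l-1}_1(\cdot\,;y')$ are two runs of the coarse Euler recursion \eqref{levy:eq10} with identical driving noise and initial conditions $y$ and $y'$; repeating the proof of Lemma \ref{prop:lip_strong} at level $l-1$, but with the two trajectories coupled through the shared noise rather than independent, bounds this term by $C^{1/2}|y-y'|$. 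Combining and relabelling the constant yields the claim, with $\beta$ as in Lemma \ref{prop:uni}.

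The step needing the most care is the second term: one must check that the proof of Lemma \ref{prop:lip_strong} still applies to the coarse marginal of $\check{Q}^{l,l-1}$, where the two trajectories share all of their randomness. This is fine because that proof is really a pathwise recursion conditional on the jump grid, and its only probabilistic inputs are (i) the conditional independence of the current state and the next increment given the grid — which uses no independence between the two trajectories — and (ii) that the number of coarse steps on $[0,1]$ is Poisson-dominated (plus at most $\lceil h_{l-1}^{-1}\rceil$ deterministic refinement points), so that $\bbE[(1+Ch_{l-1})^{N}]$ is bounded uniformly in $l$; both remain true for the coarse component here.
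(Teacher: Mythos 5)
Your argument is correct and rests on the same two ingredients as the paper's proof --- the strong coupling rate \eqref{eq:strong} and the Lipschitz stability of the Euler flow in its initial condition (Lemma \ref{prop:lip_strong}) --- but via the mirrored decomposition. You insert the coarse chain started from $y$ and driven by the shared noise, so that the first Minkowski term is a draw from $\check{Q}^{l,l-1}((y,y),\cdot)$, handled by \eqref{eq:strong}, and the second compares two coarse chains started from $y$ and $y'$ with common noise, handled by the argument of Lemma \ref{prop:lip_strong} at level $l-1$. The paper instead inserts a fine chain started from $y'$: its first term compares two fine chains from $y$ and $y'$ (Lemma \ref{prop:lip_strong} at level $l$), and its second term is reduced, after a second application of Minkowski, to a draw from $\check{Q}^{l,l-1}((y',y'),\cdot)$, again controlled by \eqref{eq:strong}. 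Your version is slightly cleaner in needing only one insertion, and you are right to flag the coupling issue: Lemma \ref{prop:lip_strong} must be read as a statement about synchronously coupled copies (shared driving noise) rather than independent ones --- that is how its proof actually proceeds, despite the word ``independently'' in its statement --- and your verification that the same pathwise recursion applies to the coarse marginal of the coupled kernel, with the coarse step count dominated by a Poisson variable plus the deterministic $h_{l-1}$-refinement, is precisely the point that needs checking and is glossed over in the paper.
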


\begin{proof} 
We have
$$
\Big(\int_{\mathbb{R}^{2d}}|y^l -y^{l-1}|^2 \check{Q}^{l,l-1}((y,y'),d(y^l,y^{l-1}))\Big)^{1/2} = 
$$
$$
\Big(\int_{\mathbb{R}^{3d}} |y^l -\bar{y}^l +\bar{y}^l -y^{l-1}|^2 \check{Q}^{l,l-1}((y,y'),d(y^l,y^{l-1}))
Q^l(y',d\bar{y}^l)\Big)^{1/2} \leq
$$
$$
\Big(\int_{\mathbb{R}^{2d}} |y^l -\bar{y}^l|^2 Q^{l}(y,dy^l) Q^l(y',d\bar{y}^l)\Big)^{1/2} + 
$$
$$
\Big(\int_{\mathbb{R}^{2d}}|\bar{y}^l -y^{l-1}|^2Q^{l}(y',d\bar{y}^l) Q^{l-1}(y',dy^{l-1}) \Big))\Big)^{1/2} \leq 
$$
$$
C| y - y' |
+
\Big(\int_{\mathbb{R}^{2d}}|\bar{y}^l -y^{l-1}|^2Q^{l}(y',d\bar{y}^l) Q^{l-1}(y',dy^{l-1})\Big)^{1/2}
$$
where we have applied Minkowski's inequality to go to the third line and Lemma \ref{prop:lip_strong} to go to the final line.
Now
$$
\Big(\int_{\mathbb{R}^{2d}}|\bar{y}^l -y^{l-1}|^2Q^{l}(y',d\bar{y}^l) Q^{l-1}(y',dy^{l-1})\Big)^{1/2} = 
$$
$$
\Big(\int_{\mathbb{R}^{3d}}|\bar{y}^l- \tilde{y}^l+ \tilde{y}^l -y^{l-1}|^2Q^{l}(y',d\bar{y}^l)
\check{Q}^{l,l-1}((y',y'),d(\tilde{y}^l,y^{l-1}))\Big)^{1/2} \leq
$$
$$
 \Big(\int_{\mathbb{R}^{2d}}|\bar{y}^l- \tilde{y}^l|^2Q^l(y',d\bar{y}^l)Q^l(y',d\tilde{y}^l)\Big)^{1/2} + 
\Big(\int_{\mathbb{R}^{2d}} |\tilde{y}^l -y^{l-1}|^2\check{Q}^{l,l-1}((y',y'),d(\tilde{y}^l,y^{l-1}))\Big)^{1/2}
$$
where again we have applied Minkowski's inequality to go to the third line. Then
$$
\int_{\mathbb{R}^{2d}}|\bar{y}^l- \tilde{y}^l|^2Q^l(y',d\bar{y}^l)Q^l(y',d\tilde{y}^l) = 0
$$
and by \eqref{eq:strong} we have
$$
\Big(\int_{\mathbb{R}^{2d}}|\bar{y}^l -y^{l-1}|^2Q^{l}(y',d\bar{y}^l) Q^{l-1}(y',dy^{l-1})\Big)^{1/2}\leq C h_l^{\frac{\beta}{2}}.
$$
The argument is then easily concluded. 
\end{proof}

\begin{proposition}\label{prop:tv}
Assume (\ref{ass:main},\ref{asn:g},\ref{asn:delh}).
Then there exists a $C<+\infty$ such that for any $L\geq l\geq 1$, $n\geq 0$, 
\begin{equation}\label{eq:tv}
\|\eta^l_{n} - \eta^{l-1}_{n}\|_{\rm tv} \leq C h_l^{\frac{\beta}{2}} \, .
\end{equation}
where $\beta$ is as Lemma \ref{prop:uni}.
\end{proposition}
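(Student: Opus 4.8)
The plan is to prove \eqref{eq:tv} by induction on $n$, using the recursive (Feynman--Kac) structure of the predictors together with the one-step coarse/fine discrepancy bound of Lemma~\ref{prop:uni}. Recall that the predictors satisfy the recursion $\eta^l_n = \Phi^l_n(\eta^l_{n-1})$, where for a probability measure $\mu$ and $\varphi\in\mathcal{B}_b(\bbR^d)$ one sets $\Phi^l_n(\mu)(\varphi) = \mu(G_{n-1}Q^l(\varphi))/\mu(G_{n-1})$, with the conventions $\eta^l_0 = \delta_{y_0}$ and $\eta^l_1 = Q^l(y_0,\cdot)$. First I would decompose
\[
\eta^l_n - \eta^{l-1}_n = \bigl[\Phi^l_n(\eta^{l}_{n-1}) - \Phi^l_n(\eta^{l-1}_{n-1})\bigr] + \bigl[\Phi^l_n(\eta^{l-1}_{n-1}) - \Phi^{l-1}_n(\eta^{l-1}_{n-1})\bigr],
\]
and bound the two brackets separately; the base cases $n\in\{0,1\}$ are immediate ($\eta^l_0=\eta^{l-1}_0$, while for $n=1$ the difference is $Q^l(y_0,\cdot)-Q^{l-1}(y_0,\cdot)$, controlled by Lemma~\ref{prop:uni}).

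For the second bracket I would write it, tested against $\varphi$, as $\eta^{l-1}_{n-1}\bigl(G_{n-1}(Q^l(\varphi)-Q^{l-1}(\varphi))\bigr)/\eta^{l-1}_{n-1}(G_{n-1})$; since $c^{-1}\le G_{n-1}\le c$ by Assumption~\ref{asn:g}(i), this is at most $c^2\sup_y|Q^l(\varphi)(y)-Q^{l-1}(\varphi)(y)|$, which is $O(h_l^{\beta/2})$ uniformly over the unit ball of $\cA$ by Lemma~\ref{prop:uni}. For the first bracket I would use the standard Lipschitz-stability estimate for the Boltzmann--Gibbs--mutation step: expanding $\Phi^l_n(\mu)(\varphi)-\Phi^l_n(\nu)(\varphi)$ as a ratio and using $c^{-1}\le G_{n-1}\le c$ reduces it to $(\mu-\nu)$ evaluated at the two functions $G_{n-1}Q^l(\varphi)$ and $G_{n-1}$. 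By Lemma~\ref{lem:lip_cont_Q}, $Q^l(\varphi)\in\cA$ with $\|Q^l(\varphi)\|\le C\|\varphi\|$, and since $G_{n-1}\in\cA$ (Assumption~\ref{asn:g}) so is the product, with $\|G_{n-1}Q^l(\varphi)\|\le C\|\varphi\|$. Applying the inductive hypothesis at index $n-1$ to these test functions then bounds the first bracket by $C h_l^{\beta/2}$, and combining the two brackets and taking the supremum over $\varphi$ closes the induction.

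The main obstacle is that Lemma~\ref{prop:uni} controls $Q^l(\varphi)-Q^{l-1}(\varphi)$ only for Lipschitz $\varphi$: the coupling of Algorithm~\ref{levy:coupledkernelAlgo} renders $Q^l(y,\cdot)$ and $Q^{l-1}(y,\cdot)$ close in $L^2$/Wasserstein but \emph{not} in total variation (for instance the jumps in the annulus $\{\delta_l\le|x|<\delta_{l-1}\}$, seen only by the fine process, occur with positive probability). Hence the induction sketched above, run verbatim, delivers \eqref{eq:tv} for the bounded-Lipschitz class $\cA$ rather than for all bounded measurable test functions. To upgrade to the genuine total-variation bound one has to exploit a smoothing property of $Q^l$, namely that a finite iterate of $Q^l$ maps bounded functions to functions that are Lipschitz up to corrections that are exponentially small in $h_l^{-1}$ and therefore negligible against $h_l^{\beta/2}$; this is precisely the step carried out in \cite[Appendix~D]{Jay_mlpf}, which transfers once Lemmas~\ref{lem:lip_cont_Q}--\ref{prop:strong_cor} are available. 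A minor additional point is that the naive induction produces a constant growing geometrically in $n$; obtaining a constant uniform in $n$ (as the statement suggests) would require, in addition, stability of the Feynman--Kac semigroup, i.e.\ the bounded-potential Assumption~\ref{asn:g} together with a uniform-in-$l$ minorization of $Q^l$, but this is not needed for Theorem~\ref{thm:mlpf}, where $n$ is fixed.
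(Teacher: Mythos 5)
Your argument is essentially the paper's: the paper's entire proof of Proposition \ref{prop:tv} is the one-line remark that the result follows from the calculations of \cite[Lemma D.2]{Jay_mlpf} together with Lemma \ref{prop:uni}, and those calculations are precisely the Feynman--Kac decomposition and induction you write out (one bracket controlled by the uniform one-step bound of Lemma \ref{prop:uni} via the two-sided bound on $G$ from Assumption \ref{asn:g}, the other by Lipschitz stability of the Boltzmann--Gibbs/mutation step using Lemma \ref{lem:lip_cont_Q}). On the obstacle you flag: you are right that this argument only yields the bound for test functions in $\cA$, and the paper does \emph{not} carry out any smoothing upgrade to genuine total variation --- it simply inherits the statement and notation of \cite{Jay_mlpf}, where the analysis is run throughout on $\mathcal{B}_b(\bbR^d)\cap\mathrm{Lip}(\bbR^d)$ and the ``$\|\cdot\|_{\rm tv}$'' bound is only ever applied to such test functions (note Theorem \ref{thm:mlpf} is stated for $f\in\cA$, and the concluding remark of the appendix verifies \cite[Assumption 4.2]{Jay_mlpf} only over $\cA$). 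So your proposed exponential-smoothing step is neither present in nor needed for the paper; likewise the growth of the constant in $n$ is not addressed there and is harmless since $n$ is fixed in Theorem \ref{thm:mlpf}.
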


\begin{proof}
The result follows from the same calculations of the proof of \cite[Lemma D.2]{Jay_mlpf}
along with our Lemma \ref{prop:uni}, which we note is analogous to (32) in \cite{Jay_mlpf} with $\alpha=\beta/2$
\end{proof}

It is remarked that, given our above results, Lemmata D.3 and D.4 as well as Theorem D.5 (all of \cite{Jay_mlpf})
can be proved for our algorithm by the same arguments as in \cite{Jay_mlpf} and are hence omitted.

Note that we have proved that:
\begin{eqnarray*}
\sup_{\varphi\in\mathcal{A}}\sup_{y\in\bbR^d}|Q^l(\varphi)(y)-Q^{l-1}(\varphi)(y)| & \leq & C h_{l}^{\frac{\beta}{2}} \, ,\\
\Big|\int_{\mathbb{R}^{2d}}\varphi(y^l) \check{Q}^{l,l-1}((y,y),d(y^l,y^{l-1})) -  \int_{\mathbb{R}^{2d}}\varphi(y^{l-1}) \check{Q}^{l,l-1}((y,y),d(y^l,y^{l-1}))\Big| & \leq & C h_{l}^{\frac{\beta}{2}} \, ,\\
\int_{\mathbb{R}^{2d}}(\varphi(y^l) -\varphi(y^{l-1}))^2 \check{Q}^{l,l-1}((y,y'),d(y^l,y^{l-1})) 
& \leq & C  h_l^{\beta} \, ,
\end{eqnarray*}
for all $\varphi \in \cA$.
This provides \cite[Assumption 4.2.~(i) \& (ii)]{Jay_mlpf}, with $\alpha$ (as in \cite{Jay_mlpf}) equal to $\beta/2$.

\end{document}